\newtheorem{theorem}{Theorem}[section]
\newtheorem{lemma}[theorem]{Lemma}
\newtheorem{corollary}[theorem]{Corollary}
\newtheorem{definition}[theorem]{Definition}
\newcommand{\seclabel}[1]{\label{sec:#1}}
\newcommand{\secref}[1]{Section~\ref{sec:#1}}
\newcommand{\lemref}[1]{Lemma~\ref{lem:#1}}
\newcommand{\thmref}[1]{Theorem~\ref{thm:#1}}
\let\footnotesize\small
\newcommand{\card}[1]{\left|#1\right|}
\newcommand{\set}[1]{\left\{#1\right\}}
\newcommand{\ceil}[1]{\left\lceil#1\right\rceil}
\newcommand{\E}{\textbf{E}}
\begin{document}

\begin{titlepage}
  \title{Smoothed Analysis of Dynamic Networks}

  \author{
    Michael Dinitz\thanks{Supported in part by NSF grant \#1464239}\\
    Johns Hopkins University\\
    \texttt{mdinitz@cs.jhu.edu}
    \and
    Jeremy T. Fineman\thanks{Supported in part by NSF
          grants CCF-1218188 and CCF-1314633}\\
    Georgetown University\\
    \texttt{jfineman@cs.georgetown.edu}
    \and
    Seth Gilbert\\
    National University of Singapore\\
    \texttt{seth.gilbert@comp.nus.edu.sg}
    \and
    Calvin Newport\\
    Georgetown University\\
    \texttt{cnewport@cs.georgetown.edu}
  }  
  \date{}

  \maketitle

  \begin{abstract}
    We generalize the technique of {\em smoothed analysis} to
    distributed algorithms in dynamic network models.  Whereas
    standard smoothed analysis studies the impact of small random
    perturbations of input values on algorithm performance metrics,
    dynamic graph smoothed analysis studies the impact of random
    perturbations of the underlying changing network graph topologies.
    Similar to the original application of smoothed analysis, our goal
    is to study whether known strong lower bounds in dynamic network
    models are \emph{robust} or \emph{fragile}: do they withstand
    small (random) perturbations, or do such deviations push the
    graphs far enough from a precise pathological instance to enable
    much better performance?  Fragile lower bounds are likely not
    relevant for real-world deployment, while robust lower bounds
    represent a true difficulty caused by dynamic behavior.  We apply
    this technique to three standard dynamic network problems with
    known strong worst-case lower bounds: random walks, flooding, and
    aggregation.  We prove that these bounds provide a spectrum of
    robustness when subjected to smoothing---some are extremely
    fragile (random walks), some are moderately fragile / robust
    (flooding), and some are extremely robust (aggregation).
   \end{abstract}

\setcounter{page}{0}
\thispagestyle{empty}
\end{titlepage}

\section{Introduction}\seclabel{intro}

Dynamic network models describe networks with topologies that change
over time (c.f.,~\cite{dynamic-overview}).  They are used to capture
the unpredictable link behavior that characterize challenging
networking scenarios; e.g., connecting and coordinating moving
vehicles, nearby smartphones, or nodes in a widespread and fragile
overlay.  Because fine-grained descriptions of link behavior in such
networks are hard to specify, most analyses of dynamic networks rely
instead on a worst-case selection of graph changes.
This property is crucial to the usefulness of these analyses, as it
helps ensure the results persist in real deployment.

A problem with this worst case perspective is that it often leads to
extremely strong lower bounds.  These strong results motivate a key
question: {\em Is this bound {\em robust} in the sense that it
  captures a fundamental difficulty introduced by dynamism, or is the
  bound {\em fragile} in the sense that the poor performance it
  describes depends on an exact sequence of adversarial changes?}
Fragile lower bounds leave open the possibility of algorithms that
might still perform well in practice. By separating fragile from
robust results in these models, therefore, we can expand the
algorithmic tools available to those seeking useful guarantees in
these challenging environments.
 
In the study of traditional algorithms, an important technique for
explaining why algorithms work well in practice, despite disappointing
worst case performance, is {\em smoothed
  analysis}~\cite{SpielmanT04,SpielmanT09}.  This approach studies the
expected performance of an algorithm when the inputs are slightly
randomly perturbed. If a strong lower bound dissipates after a small
amount of smoothing, it is considered fragile---as it depends on a
carefully constructed degenerate case.  Note that this is different
from an ``average-case" analysis, which looks at instances drawn from
some distribution.  In a smoothed analysis, you still begin with an
adversarially chosen input, but then slightly perturb this choice.  Of
course, as the perturbation grows larger, the input converges toward
random. (Indeed, in the original smoothed analysis
papers~\cite{SpielmanT04,SpielmanT09}, the technique is described as
interpolating between worst and average case analysis.)

In this paper, we take the natural next step of adapting smoothed
analysis to the study of distributed algorithms in dynamic networks.
Whereas in the traditional setting smoothing typically perturbs
numerical input values, in our setting we define smoothing to perturb
the network graph through the random addition and deletion of edges.
We claim that a lower bound for a dynamic network model that improves
with just a small amount of graph smoothing of this type is fragile,
as it depends on the topology evolving in an exact manner.  On the
other hand, a lower bound that persists even after substantial
smoothing is robust, as this reveals a large number of similar graphs
for which the bound holds.

\paragraph{Results.}
We begin by providing a general definition of a dynamic network model
that captures many of the existing models already studied in the
distributed algorithms literature.  At the core of a dynamic network
model is a dynamic graph that describes the evolving network topology.
We provide a natural definition of {\em smoothing} for a dynamic graph
that is parameterized with a {\em smoothing factor}
$k\in \{0,1,...,\binom{n}{2}\}$.  In more detail, to $k$-smooth a
dynamic graph ${\cal H}$ is to replace each static graph $G$ in
${\cal H}$ with a smoothed graph $G'$ sampled uniformly from the space
of graphs that are:
(1) within edit distance\footnote{The notion of edit distance we
use in this paper is the number of edge additions/deletions needed to transform
one graph to another, assuming they share the same node set.} $k$ of $G$, and 
(2) are allowed by the relevant dynamic network model (e.g., if the
model requires the graph to be connected in every round, smoothing
cannot generate a disconnected graph).

We must next argue that these definitions allow for useful discernment
between different dynamic network lower bounds.  To this end, we use
as case studies three well-known problems with strong lower bounds in
dynamic network models: flooding, random walks, and aggregation.  For
each problem, we explore the robustness/fragility of the existing
bound by studying how it improves under increasing amounts of
smoothing.  Our results are summarized in Table~\ref{fig:results}.  We
emphasize the surprising variety in outcomes: these results capture a
wide spectrum of possible responses to smoothing, from quite fragile
to quite robust.

For the minimal amount of smoothing ($k=1$), for example, the
$\Omega(2^n)$ lower bound for the hitting time of a random walk in
connected dynamic networks (established in~\cite{avin:2008}) decreases
by an {\em exponential} factor to $O(n^3)$, the $\Omega(n)$ lower
bound for flooding time in these same networks (well-known in
folklore) decreases by a {\em polynomial} factor to
$O(n^{2/3}\log{n})$, and the $\Omega(n)$ lower bound on achievable
competitive ratio for token aggregation in pairing dynamic graphs
(established in~\cite{cornejo:2012}) decreases by only a {\em
  constant} factor.
 
As we increase the smoothing factor $k$, our upper bound on random
walk hitting time decreases as $O(n^3/k)$, while our flooding upper
bound reduces slower as $O(n^{2/3}\log{n}/k^{1/3})$, and our
aggregation bound remains in $\Omega(n)$ for $k$ values as large as
$\Theta(n/\log^2{n})$.  In all three cases we also prove tight or near
tight lower bounds for all studied values of $k$, showing that these
analyses are correctly capturing the impact of smoothing.
 
Among other insights, these results indicate that the exponential
hitting time lower bound for dynamic walks is extremely fragile, while
the impossibility of obtaining a good competitive ratio for dynamic
aggregation is quite robust.  Flooding provides an interesting
intermediate case.  While it is clear that an $\Omega(n)$ bound is
fragile, the claim that flooding can take a polynomial amount of time
(say, in the range $n^{1/3}$ to $n^{2/3}$) seems well-supported.

\begin{table}
\footnotesize
\begin{tabular}{|lllll|}  
\hline
& Graph &$k$-Smoothed Algorithm & $k$-Smoothed Lower Bound & $0$-Smoothed Lower Bound \\ \hline
Flooding & Connected &$O(n^{2/3}\log{n}/k^{1/3})$ & $\Omega(n^{2/3}/k^{1/3})$ & $\Omega(n)$ \\
Hitting Time & Connected  &  $O(n^3/k)$ &  $\Omega(n^{5/2}/(\sqrt{k}\log{n})$ & $\Omega(2^n)$ \\
Aggregation & Paired & $O(n)$-competitive & $\Omega(n)$-competitive & $\Omega(n)$-competitive \\
\hline
\end{tabular}
\caption{A summary of our main results. The columns labelled ``$k$-smoothed" assume $k >0$. Different
results assume different upper bounds on $k$.}
\label{fig:results}
\end{table}

\paragraph{Next Steps.}
The definitions and results that follow represent a first (but far
from final) step toward the goal of adapting smoothed analysis to the
dynamic network setting.  There are many additional interesting
dynamic network bounds that could be subjected to a smoothed analysis.
Moreover, there are many other reasonable definitions of smoothing
beyond the ones we use here.  While our definition is natural and our
results suggestive, for other problems or model variations other
definitions might be more appropriate.  Rather than claiming that our
approach here is the ``right" way to study the fragility of dynamic
network lower bounds, we instead claim that smoothed analysis
generally speaking (in all its various possible formulations) is an
important and promising tool when trying to understand the fundamental
limits of distributed behavior in dynamic network settings.
 
\paragraph{Related Work.}
Smoothed analysis was introduced by Spielman and
Teng~\cite{SpielmanT04,SpielmanT09}, who used the technique to explain
why the simplex algorithm works well in practice despite strong
worst-case lower bounds.  It has been widely applied to traditional
algorithm problems (see~\cite{spielman:2009} for a good introduction
and survey).  Recent interest in studying distributed algorithms in
dynamic networks was sparked by Kuhn et al.~\cite{kuhn:2010}.  Many
different problems and dynamic network models have since been
proposed;
e.g.,~\cite{kuhn:2011,haeupler:2011,dutta:2013,clementi:2012,augustine:2012,denysyuk:2014,newport:2014,ghaffari:2013}
(see~\cite{dynamic-overview} for a survey).  The dynamic random walk
lower bound we study was first proposed by Avin
et~al.~\cite{avin:2008}, while the dynamic aggregation lower bound we
study was first proposed by Cornejo et~al.~\cite{cornejo:2012}.  We
note other techniques have been proposed for exploring the fragility
of dynamic network lower bounds.  In recent work, for example,
Denysyuk et al.~\cite{denysyuk:2014} thwart the exponential random
walk lower bound due to~\cite{avin:2008} by requiring the dynamic
graph to include a certain number of static graphs from a well-defined
set, while work by Ghaffari et~al.~\cite{ghaffari:2013} studies the
impact of adversary strength, and Newport~\cite{newport:2014} studies
the impact of graph properties, on lower bounds in the dynamic radio
network model.

\section{Dynamic Graphs, Networks, and Types}

There is no single dynamic network model.  There are, instead, many
different models that share the same basic behavior: nodes executing
synchronous algorithms are connected by a network graph that can
change from round to round.  Details on how the graphs can change and
how communication behaves given a graph differ between model types.

In this section we provide a general definition for a dynamic network
models that captures many existing models in the relevant literature.
This approach allows us in the next section to define smoothing with
sufficient generality that it can apply to these existing models.  We
note that in this paper we constrain our attention to {\em oblivious}
graph behavior (i.e., the changing graph is fixed at the beginning of
the execution), but that the definitions that follow generalize in a
straightforward manner to capture adaptive models (i.e., the changing
graph can adapt to behavior of the algorithm).

\noindent {\bf Dynamic Graphs and Networks.}
Fix some node set $V$, where $n=|V|$.  A {\em dynamic graph}
${\cal H}$, defined with respect to $V$, is a sequence $G_1,G_2,...$,
where each $G_i= (V,E_i)$ is a graph defined over nodes $V$.  If this
is not an infinite sequence, then the \emph{length} of $\mathcal H$ is
$|\mathcal H|$, the number of graphs in the sequence.  A {\em dynamic
  network}, defined with respect to $V$, is a pair, $({\cal H}, C)$,
where ${\cal H}$ is a dynamic graph, and $C$ is a {\em communication
  rules} function that maps {\em transmission patterns} to {\em
  receive patterns}.  That is, the function takes as input a static
graph and an assignment of messages to nodes, and returns an
assignment of received messages to nodes.  For example, in the
classical radio network model $C$ would specify that nodes receive a
message only if exactly one of their neighbors transmits, while in the
$\mathcal{LOCAL}$ model $C$ would specify that all nodes receive all
messages sent by their neighbors.
Finally, an {\em algorithm} maps process definitions to nodes in $V$.

Given a dynamic network $({\cal H}, C)$ and an algorithm ${\cal A}$,
an execution of ${\cal A}$ in $({\cal H}, C)$ proceeds as follows: for
each round $r$, nodes use their process definition according to
${\cal A}$ to determine their transmission behavior, and the resulting
receive behavior is determined by applying $C$ to ${\cal H}[r]$ and
this transmission pattern.

\noindent {\bf Dynamic Network Types.}
When we think of a dynamic network model suitable for running
executions of distributed algorithms, what we really mean is a
combination of a description of how communication works, and a set of
the different dynamic graphs we might encounter.  We formalize this
notion with the concept of the {\em dynamic network type}, which we
define as a pair $({\cal G}, C)$, where ${\cal G}$ is a set of dynamic
graphs and $C$ is a communication rules function.  For each
${\cal H} \in {\cal G}$, we say dynamic network type $({\cal G}, C)$
contains the dynamic network $({\cal H}, C)$.

When proving an upper bound result, we will typically show that the
result holds when our algorithm is executed in any dynamic network
contained within a given type.  When proving a lower bound result, we
will typically show that there exists a dynamic network contained
within the relevant type for which the result holds.
In this paper, we will define and analyze two existing dynamic network types:
{\em ($1$-interval)  connected networks}~\cite{kuhn:2010,kuhn:2011,haeupler:2011,dutta:2013}, in which the graph in each round
is connected and $C$ describes reliable broadcast to neighbors in the graph,
and {\em pairing networks}~\cite{cornejo:2012},
in which the graph in each round is a matching and $C$ describes reliable 
message passing with each node's neighbor (if any).

\section{Smoothing Dynamic Graphs} \seclabel{smoothing}

We now define a version of smoothed analysis that is relevant to
dynamic graphs.  To begin, we define the {\em edit distance} between
two static graphs $G=(V,E)$ and $G'=(V,E')$ to be the minimum number
of edge additions and removals needed to transform $G$ to $G'$.  With
this in mind, for a given $G$ and $k\in \{0,1,...,\binom{n}{2}\}$, we
defined the set:

\[ editdist(G,k) = \{ G' \mid \text{the edit distance between $G$ and $G'$ is no more than $k$}\}.\]

\noindent Finally, for a given set of dynamic graphs ${\cal G}$,
we define the set: 

\[ allowed({\cal G}) = \{ G\mid \exists {\cal H} \in {\cal G} \text{ such that } G\in {\cal H}\}.\]

\noindent In other words, $allowed$ describes all graphs that show up
in the dynamic graphs contained in the set ${\cal G}$.  Our notion of
smoothing is always defined with respect to a dynamic graph set
${\cal G}$.  Formally:

\begin{definition}
  Fix a set of dynamic graphs ${\cal G}$, a dynamic graph
  ${\cal H} \in {\cal G}$, and {\em smoothing factor}
  $k\in \{0,1,...,\binom{n}{2}\}$.  To {\em $k$-smooth} a static graph
  $G\in {\cal H}$ (with respect to ${\cal G}$) is to replace $G$ with
  a graph $G'$ sampled uniformly from the set
  $editdist(G,k) \cap allowed({\cal G})$.  To {\em $k$-smooth} the
  entire dynamic graph ${\cal H}$ (with respect to ${\cal G}$), is to
  replace ${\cal H}$ with the dynamic graph ${\cal H'}$ that results
  when we $k$-smooth each of its static graphs.
\end{definition}

\noindent We will also sometimes say that $G'$ (resp. ${\cal H'}$) is
a {\em $k$-smoothed} version of $G$ (resp. ${\cal H}$), or simply a
{\em $k$-smoothed} $G$ (resp. ${\cal H}$).  We often omit the dynamic
graph set ${\cal G}$ when it is clear in context. (Typically,
${\cal G}$ will be the set contained in a dynamic network type under
consideration.)

\paragraph{Discussion.}
Our notion of $k$-smoothing transforms a graph by randomly adding or
deleting $k$ edges. A key piece of our definition is that smoothing a
graph with respect to a dynamic graph set cannot produce a graph not
found in any members of that set. This restriction is particularly
important for proving lower bounds on smoothed graphs, as we want to
make sure that the lower bound results does not rely on a dynamic
graph that could not otherwise appear. For example, if studying a
process in a dynamic graph that is always connected, we do not want
smoothing to disconnect the graph---an event that might trivialize
some bounds.

\section{Connected and Pairing Dynamic Network Types}\seclabel{types}

This section defines the two dynamic network types studied in this
paper: the {\em connected network
  type}~\cite{kuhn:2010,kuhn:2011,haeupler:2011,dutta:2013}, and the
{\em pairing network type}~\cite{cornejo:2012}.  We study random walks
(\secref{random-walks}) and flooding (\secref{flooding}) in the
context of the connected network type, whereas we study token
aggregation (\secref{aggregation}) in the context of the pairing type.

\subsection{Connected Network}
The {\em connected network
  type}~\cite{kuhn:2010,kuhn:2011,haeupler:2011,dutta:2013} is defined
as $({\cal G}_{conn}, C_{conn})$, where ${\cal G}_{conn}$ contains
every dynamic graph (defined with respect to our fixed node set $V$)
in which every individual graph is connected, and where $C_{conn}$
describes reliable broadcast (i.e., a message sent by $u$ in rounds
$r$ in an execution in graph ${\cal H}$ is received by every neighbor
of $u$ in ${\cal H}[r]$).

\paragraph{Properties of Smoothed Connected Networks.}
For our upper bounds, we show that if certain edges are added to the
graph through smoothing, then the algorithm makes enough progress on
the smoothed graph.  For our lower bounds, we show that if certain
edges are not added to the graph, then the algorithm does not make
much progress.  The following lemmas bound the probabilities that
these edges are added.  The proofs roughly amount to showing that
sampling uniformly from $editdist(G,k) \cap allowed({\cal G}_{conn})$
is similar to sampling from $editdist(G,k)$.

The first two lemmas are applicable when upper-bounding the
performance of an algorithm on a smoothed dynamic graph.  The first
lemma states that the $k$-smoothed version of graph $G$ is fairly
likely to include at least one edge from the set $S$ of helpful edges.
The second lemma, conversely, says that certain critical edges that
already exist in $G$ are very unlikely to be removed in the smoothed
version.

\begin{lemma} \label{lem:connlb}
  There exists constant $c_1>0$ such that the following holds.
  Consider any graph $G \in allowed({\cal G}_{conn})$.  Consider also
  any nonempty set $S$ of potential edges and smoothing value $k \leq n/16$ with
  $k\card{S} \leq n^2/2$.  Then with probability at least
  $c_1k\card{S}/n^2$, the $k$-smoothed graph $G'$ of $G$ contains at
  least one edge from $S$.
\end{lemma}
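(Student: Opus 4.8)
The plan is to show that sampling uniformly from $editdist(G,k) \cap allowed(\mathcal{G}_{conn})$ behaves, up to constant factors, like sampling uniformly from $editdist(G,k)$, and then to do the easier probability estimate in the unrestricted space. First I would observe that $allowed(\mathcal{G}_{conn})$ is just the set of connected graphs on $V$, so the only difference between the two sampling spaces is that we must exclude graphs that are within edit distance $k$ of $G$ but disconnected. The key quantitative claim I would establish is that $|editdist(G,k) \cap allowed(\mathcal{G}_{conn})| \geq \alpha \cdot |editdist(G,k)|$ for some absolute constant $\alpha > 0$, given the hypothesis $k \leq n/16$. Intuitively, since $G$ itself is connected, a random edit of $k \ll n$ edges is very unlikely to disconnect it: to disconnect $G$ one must destroy every edge across some cut, and for the relevant cuts this requires deleting many specific edges, which a budget-$k$ edit rarely does. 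Making this precise is the first substantive step — I would likely argue that a uniformly random element of $editdist(G,k)$ is connected with constant probability, perhaps by a union bound over small components (a disconnecting edit isolates some vertex set $T$; bound the number of such edits by summing over $|T|$, using that all edges leaving $T$ in $G$ must be deleted while at most $k$ deletions are available).

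Given that density bound, it suffices to show that a uniformly random $G' \in editdist(G,k)$ contains an edge of $S$ with probability $\Omega(k|S|/n^2)$; dividing by the constant $\alpha$ only costs a constant factor, which gets absorbed into $c_1$. For the unrestricted estimate, I would count as follows: the probability that $G'$ avoids all of $S$ is the fraction of edit-distance-$\leq k$ modifications that touch no edge slot in $S$. Writing $N = \binom{n}{2}$ for the total number of edge slots, the number of graphs within edit distance exactly $j$ is $\binom{N}{j}$ (choosing which $j$ slots to flip), and restricting to flips that avoid $S$ gives $\binom{N - |S|}{j}$. So the probability of hitting $S$ is $1 - \big(\sum_j \binom{N-|S|}{j}\big) / \big(\sum_j \binom{N}{j}\big)$ where $j$ ranges over $0,\dots,k$. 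A ratio-of-binomial-sums estimate — comparing term by term $\binom{N-|S|}{j}/\binom{N}{j} = \prod_{i=0}^{|S|-1}\frac{N-j-i}{N-i}$ and using $k|S| \leq N$ (which follows from $k|S| \leq n^2/2$) — shows this ratio is at most $1 - \Omega(k|S|/N)$, i.e. the hitting probability is $\Omega(k|S|/n^2)$. One must be slightly careful that the dominant mass of $\sum_{j\leq k}\binom{N}{j}$ sits at $j$ close to $k$ (since $k \ll N$ these binomials are increasing in $j$), so that "typical $j$" is indeed of order $k$ and we really gain the full factor $k$.

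I expect the main obstacle to be the connectivity-preservation step: one has to bound the number of budget-$k$ edits of a connected $G$ that produce a disconnected graph, and compare it to the total $\sum_{j \leq k}\binom{N}{j}$, cleanly enough to get a constant lower bound on the surviving fraction. The subtlety is that $G$ could have many low-degree vertices (degree as small as $1$), so a single deletion can disconnect it, and there could be up to $n$ such vulnerable vertices — this is exactly where the hypothesis $k \leq n/16$ is needed, to ensure the expected number of such "accidental" disconnections stays below a constant. A careful union bound, splitting the disconnecting set $T$ by size and noting that isolating a set $T$ with $e(T, V\setminus T)$ crossing edges requires spending $e(T,V\setminus T) \geq 1$ of the $k$ deletions on those specific edges while the rest of the budget is essentially free, should give a geometric-type sum dominated by its first term of order $k/n$, hence a constant overall. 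The remaining binomial-ratio computations are routine once the parameter inequalities $k \leq n/16$ and $k|S| \leq n^2/2$ are in hand.
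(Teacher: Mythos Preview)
Your plan has a genuine gap in the combination step. You propose to show separately that (i) a uniform sample from $editdist(G,k)$ is connected with probability at least some constant $\alpha$, and (ii) a uniform sample from $editdist(G,k)$ contains an $S$-edge with probability $\Omega(k|S|/n^2)$, and then to combine these by ``dividing by $\alpha$''. But that does not give a lower bound on $\Pr[A\mid B]$. Writing $A$ for ``contains an $S$-edge'' and $B$ for ``connected'', we have $\Pr[A\mid B]=\Pr[A\cap B]/\Pr[B]$; a lower bound on $\Pr[B]$ pushes this in the wrong direction, and a lower bound on $\Pr[A]$ alone says nothing about $\Pr[A\cap B]$ without some correlation information. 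The subtraction route $\Pr[A\cap B]\ge \Pr[A]-\Pr[\bar B]$ also fails: take $G$ a path and $|S|=1$, so $\Pr[A]=\Theta(k/n^2)$ while $\Pr[\bar B]=\Theta(k/n)$, which swamps $\Pr[A]$.

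The paper closes exactly this gap by conditioning not on $B$ but on the stronger event $B_T$ that a fixed spanning tree $T$ of $G$ survives the smoothing. First, $\Pr[B_T]\ge 1/2$ by a direct union bound over the at most $k$ toggled edges hitting one of the $n-1$ tree edges (this is where $k\le n/16$ is used). Second, and crucially, when $S\cap T=\emptyset$ the conditioning on $B_T$ only forbids toggling edges in $T$, which cannot decrease the chance of toggling an edge of $S$; hence $\Pr[A\mid B_T]\ge \Pr[A]$. Together, $\Pr[A\cap B]\ge \Pr[A\cap B_T]=\Pr[A\mid B_T]\Pr[B_T]\ge \tfrac12\Pr[A]$, and then $\Pr[A\mid B]\ge \Pr[A\cap B]$. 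The case $S\cap E\neq\emptyset$ is handled by the same idea, fixing one edge $e\in S\cap E$ together with $T$. Your computation of $\Pr[A]=\Omega(k|S|/n^2)$ via the binomial-ratio argument is essentially what the paper does for that piece; the missing ingredient is the spanning-tree conditioning that manufactures the needed positive association between $A$ and connectivity. As a side remark, this also obviates your proposed union bound over all separating vertex sets, which would be considerably more delicate.
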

\begin{proof}
  We start by noting that sampling a graph $G_D$ uniformly from
  $editdist(G,k)$ and then rejecting (and retrying) if
  $G_D \not\in allowed({\cal G}_{conn})$ is equivalent to sampling
  uniformly from $k$-smoothed graphs.  Let $A$ be the event that $G_D$
  includes an edge from $S$.  Let $B$ be the event that $G_D$ is
  connected. Our goal is to lower-bound $\Pr[A|B]$ starting from
  $\Pr[A|B] \geq \Pr[\text{$A$ and $B$}]$.

  The proof consists of two cases, depending on whether the input
  graph $G=(V,E)$ contains an edge from $S$.  For the first case,
  suppose that it does.  Choose an arbitrary edge $e \in S\cap E$, and
  choose an arbitrary spanning tree $T$ or $G$.  Let $B_{Te}$ be the
  event that $G_D$ contains all edges in $T\cup\set{e}$.  Note that if
  $B_{Te}$ occurs, then $G_D$ is both connected and contains an edge
  from $S$.  Thus, $\Pr[\text{$A$ and $B$}] \geq \Pr[B_{Te}]$, and we
  need only bound $\Pr[B_{Te}]$.  Sampling a graph $G_D$ from
  $editdist(G,k)$ is equivalent to selecting up to $k$ random edges
  from among all potential $\binom{n}{2}$ edges and toggling their
  status.  Consider the $i$th edge toggled through this process.  The
  probability that the edge is one of the $n$ edges in $T$ or $e$ is
  at most $2n/\binom{n}{2}$, where the loose factor of $2$ arises from
  the fact that the $i$th edge is only selected from among the
  $\binom{n}{2}-i$ remaining edges (and
  $\binom{n}{2}-i \geq \binom{n}{2}/2$ for
  $k\leq\binom{n}{2}/2$). Taking a union bound over at most $k$ edge
  choices, we have $\Pr[\text{not $B_{Te}$}] \leq 2kn/\binom{n}{2}$.  Thus
  $\Pr[B_{Te}] \geq 1-2kn/\binom{n}{2} \geq 1/2$ for $k \leq n/16$.

  The second case is that $S \cap E = \emptyset$. As before, choose
  any arbitrary spanning tree $T$ in $G$.  Let $B_T$ be the event that
  $T$ is in $G_D$.  Since $B_T \subseteq B$, we have $\Pr[\text{$A$
    and $B$}] \geq \Pr[\text{$A$ and $B_T$}] = \Pr[A|B_T]\Pr[B_T] \geq
  \Pr[A|B_T]/2$,
  where the inequality follows from the argument of the first case.
  Since $S$ and $T$ are disjoint, the probability of selecting at
  least one edge from $S$ from among the potential
  $\binom{n}{2}-\card{T}$ edges not including $T$ is higher than the
  probability of selecting at least one edge from $S$ from among all
  potential $\binom{n}{2}$ edges.  We thus have $\Pr[\text{$A$ and
    $B$}] \geq \Pr[A|B_T]/2 \geq \Pr[A]/2$.  

  To complete the second case, we next lower-bound $\Pr[A]$.  Consider
  the process of sampling $G_D$ by toggling up to $k$ edges.  The
  number of possible graphs to sample from is thus
  $\sum_{i=0}^k \binom{n(n-1)/2}{i}$.  For $k \leq \binom{n}{2}/2$
  (which follows from $k \leq n/2$), the terms in the summation are
  strictly increasing.  So
  $\sum_{i=\ceil{k/2}}^k \binom{n(n-1)/2}{i} \geq (1/2) \sum_{i=0}^k
  \binom{n(n-1)/2}{i}$,
  i.e., with probability at least $1/2$ we toggle at least $k/2$
  edges.  If we choose (at least) $k/2$ random edges, the probability
  that none of them is in $S$ is at most
  $\left(1-\frac{\card{S}}{\binom{n}{2}}\right)^{k/2} \leq
  \left(1-\frac{(k/4)\card{S}}{\binom{n}{2}}\right)$
  following from the lemma assumption that the right-hand side is at
  least $1/2$.\footnote{The inequality can easily be proven by
    induction over the exponent: assuming the product so far satisfies
    $p \geq 1/2$, we have $p(1-x) \leq p-x/2$.}  Hence, when choosing
  at least $k/2$ edges, the probability of selecting at least one edge
  from $S$ is at least $(k/4)\card{S}/\binom{n}{2}$.  We conclude that
  $\Pr[A] \geq (1/2)(k/4)\card{S}/\binom{n}{2} \geq
  (1/8)k\card{S}/n^2$.
\end{proof}

\begin{lemma} \label{lem:connmissing} 
  There exists constant $c_2>0$ such that the following holds.
  Consider any graph $G = (V,E) \in allowed({\cal G}_{conn})$.
  Consider also any nonempty set $S \subseteq E$ of edges in the graph
  and smoothing value $k \leq n/16$.  Then with probability at most
  $c_2 k \card{S}/n^2$, the $k$-smoothed graph $G'$ removes an edge
  from~$S$.
\end{lemma}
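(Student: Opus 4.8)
The plan is to reuse the rejection‑sampling framework already set up in the proof of \lemref{connlb}. Sampling the $k$-smoothed graph $G'$ is equivalent to sampling $G_D$ uniformly from $editdist(G,k)$ and conditioning on the event $B$ that $G_D$ is connected, and sampling from $editdist(G,k)$ is itself the process of choosing up to $k$ distinct potential edges uniformly at random (without replacement) and toggling their status. Let $A$ be the event that at least one edge of $S$ gets toggled. Since $S \subseteq E$, toggling an edge of $S$ deletes it, so $\Pr[\text{$G'$ removes an edge of $S$}] = \Pr[A \mid B] \leq \Pr[A]/\Pr[B]$, and it suffices to upper-bound $\Pr[A]$ and lower-bound $\Pr[B]$.

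For $\Pr[A]$: conditioned on any number $i \le k$ of toggled edges, the $j$th edge toggled is chosen uniformly from the at least $\binom{n}{2} - (j-1) \geq \binom{n}{2}/2$ edges not yet toggled (using $k \leq n/16 \leq \binom{n}{2}/2$ for $n\ge 2$), so it lands in $S$ with probability at most $2\card{S}/\binom{n}{2}$. A union bound over the at most $k$ toggles gives $\Pr[A] \leq 2k\card{S}/\binom{n}{2}$, regardless of the (random) number of toggles actually performed.

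For $\Pr[B]$: since $G \in allowed({\cal G}_{conn})$ it is connected, so fix an arbitrary spanning tree $T$ of $G$, which has at most $n-1$ edges. If none of the edges of $T$ is toggled, then $G_D \supseteq T$ and hence $G_D$ is connected. By the same per-toggle union bound as above, the probability that some tree edge is toggled is at most $2k(n-1)/\binom{n}{2}$, which is at most $1/2$ when $k \leq n/16$. Hence $\Pr[B] \geq 1/2$. Combining the two estimates yields $\Pr[A \mid B] \leq 4k\card{S}/\binom{n}{2} \leq c_2 k\card{S}/n^2$ for a suitable absolute constant $c_2$ (using $\binom{n}{2} \geq n^2/4$), which is the claimed bound.

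The only nontrivial point—the main obstacle—is the lower bound $\Pr[B] \geq 1/2$: we need the random edits to leave the graph connected with constant probability, since otherwise conditioning on connectivity could inflate the bad probability $\Pr[A]$ by more than a constant factor. The spanning-tree argument works precisely because in the regime $k \leq n/16$ the number of toggled edges is small compared to the $n-1$ tree edges; outside this regime conditioning on connectivity could be far more delicate and would require a different argument. A minor technical care point, handled above by conditioning on the number of toggles and then union-bounding against the maximum value $k$, is that the number of toggled edges is itself a random variable.
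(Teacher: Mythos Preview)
Your proof is correct and follows essentially the same approach as the paper: bound $\Pr[A\mid B] \leq \Pr[A]/\Pr[B]$, use a union bound over the at most $k$ toggled edges to get $\Pr[A] \leq 2k\card{S}/\binom{n}{2}$, and use the spanning-tree survival argument (which the paper defers to the proof of \lemref{connlb}) to get $\Pr[B] \geq 1/2$. Your write-up in fact spells out the $\Pr[B]\geq 1/2$ step more explicitly than the paper does.
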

\begin{proof}
  As in the proof of \lemref{connlb}, consider a graph $G_D$ sampled from
  $editdist(G,k)$, and let $B$ be the event that $G_D$ remains
  connected.  Let $A_R$ be the event that an edge from $S$ is
  deleted.  We want to upper bound $\Pr[A_R|B] = \Pr[\text{$A_R$ and
    $B$}] / \Pr[B] \leq \Pr[A_R]/\Pr[B]$.  As long as $\Pr[B] \geq 1/2$,
  we have $\Pr[A_R|B] \leq 2\Pr[A_R]$.  (Proving that $\Pr[B] \geq 1/2$ is
  virtually identical to the second case in proof of
  \lemref{connlb}.)

  We now upper-bound $\Pr[A_R]$. For $A_R$ to occur, an edge from $S$
  must be toggled.  So we can upper-bound $\Pr[A_R]$ using a union
  bound over the at most $k$ edge selections.  In particular, the
  $i$th edge selected belongs to $S$ with probability at most
  $2\card{S}/\binom{n}{2}$ (by the same argument as case~2 of
  \lemref{connlb}), giving $\Pr[A_R] \leq 2k\card{S}/\binom{n}{2}$. 
\end{proof}

Our next lemma is applicable when lower-bounding an algorithm's
performance on a dynamic graph.  It says essentially that
\lemref{connlb} is tight---it is not too likely to add any of the
helpful edges from $S$.

\begin{lemma} \label{lem:connub}
  There exists constant $c_3 > 0$ such that the following holds.
  Consider any graph $G=(V,E) \in allowed({\cal G}_{conn})$.  Consider
  also any set $S$ of edges and smoothing value $k \leq n/16$ such that
  $S\cap E = \emptyset$.  Then with probability at most
  $c_3k\card{S}/n^2$, the $k$-smoothed graph $G'$ of $G$ contains an
  edge from~$S$.
\end{lemma}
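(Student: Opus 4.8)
The plan is to follow the template already used in the proof of \lemref{connmissing} and in the second case of \lemref{connlb}. First I would record the equivalence used throughout this section: sampling the $k$-smoothed graph $G'$ is the same as sampling $G_D$ uniformly from $editdist(G,k)$ and conditioning on the event $B$ that $G_D$ is connected. Let $A$ be the event that $G_D$ contains an edge from $S$. Conditioning on connectivity can only make edges in $S$ \emph{more} likely to appear, so I cannot hope for $\Pr[A|B]\le\Pr[A]$ directly; instead I would write $\Pr[A|B]=\Pr[\text{$A$ and $B$}]/\Pr[B]\le\Pr[A]/\Pr[B]$, so that it suffices to lower-bound $\Pr[B]$ by a constant and upper-bound $\Pr[A]$.

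For $\Pr[B]$, I would reuse the spanning-tree argument from case~2 of \lemref{connlb}. Fix an arbitrary spanning tree $T$ of $G$ (which exists since $G \in allowed({\cal G}_{conn})$ is connected), and note that $T$ is disjoint from $S$ by the hypothesis $S\cap E=\emptyset$. Viewing the sampling as toggling up to $k$ of the $\binom{n}{2}$ potential edges, a union bound over the at most $k$ toggles (each of which hits one of the fewer than $n$ edges of $T$ with probability at most $n/(\binom{n}{2}-i)\le 2n/\binom{n}{2}$, using $\binom{n}{2}-i\ge\binom{n}{2}/2$ for $k\le\binom{n}{2}/2$) gives that the probability some edge of $T$ is toggled is at most $2kn/\binom{n}{2}\le 1/2$ for $k\le n/16$. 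If no edge of $T$ is toggled then $G_D\supseteq T$ is connected, so $\Pr[B]\ge 1/2$, and hence $\Pr[A|B]\le 2\Pr[A]$.

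It remains to bound $\Pr[A]$. Because $S\cap E=\emptyset$, for $G_D$ to contain an edge of $S$ the toggling process must select some edge of $S$. Union-bounding over the at most $k$ selections, and again using $\binom{n}{2}-i\ge\binom{n}{2}/2$, the $i$th selected edge lies in $S$ with probability at most $2\card{S}/\binom{n}{2}$, so $\Pr[A]\le 2k\card{S}/\binom{n}{2}\le 8k\card{S}/n^2$ (using $\binom{n}{2}\ge n^2/4$ for $n\ge 2$). Combining the two bounds yields $\Pr[A|B]\le 16k\card{S}/n^2$, so the lemma holds with $c_3=16$ (or any larger constant). I do not anticipate a genuine obstacle: the only point requiring care is that the conditioning on connectivity must be handled through the inequality $\Pr[A|B]\le\Pr[A]/\Pr[B]$ rather than by any monotonicity claim, precisely because the appearance of edges in $S$ is positively correlated with the conditioning event $B$.
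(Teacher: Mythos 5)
Your proposal is correct and follows essentially the same route as the paper: the paper's proof of this lemma simply points to the proof of Lemma~\ref{lem:connmissing}, which is exactly the argument you give --- bound $\Pr[A|B]\le\Pr[A]/\Pr[B]$, show $\Pr[B]\ge 1/2$ via the spanning-tree union bound from case~2 of Lemma~\ref{lem:connlb}, and union-bound $\Pr[A]$ over the at most $k$ toggles, using $S\cap E=\emptyset$ so that an edge of $S$ can appear only by being toggled. No gaps.
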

\begin{proof}
  This proof is identical to the proof of \lemref{connmissing}, with
  the event $A_R$ replaced by the event $A$ that at least one edge
  from $S$ is added.  (In either case, the event that at least one
  edge from $S$ is toggled by $G_D$.  Here, it is important that
  $S\cap E = \emptyset$ so that toggling is required to yield the edge
  in $G'$.) 
\end{proof}

\subsection{Pairing Network}
The second type we study is the {\em pairing network
  type}~\cite{cornejo:2012}.  This type is defined as
$({\cal G}_{pair}, C_{pair})$, where ${\cal G}_{pair}$ contains every
dynamic graph (defined with respect to our fixed node set $V$) in
which every individual graph is a (not necessarily complete) matching,
and $C_{pair}$ reliable communicates messages between pairs of nodes
connected in the given round. This network type is motivated by the
current peer-to-peer network technologies implemented in smart
devices. These low-level protocols usually depend on discovering
nearby nodes and initiating one-on-one local interaction.

\paragraph{Properties of Smoothed Pairing Networks.}
In the following, when discussing a matching $G$, we partition nodes
into one of two {\em types}: a node is {\em matched} if it is
connected to another node by an edge in $G$, and it is otherwise {\em
  unmatched}.  The following property concerns the probability that
smoothing affects (i.e., adds or deletes at least one adjacent edge) a
given node $u$ from a set $S$ of nodes of the same type.  It notes
that as the set $S$ containing $u$ grows, the {\em upper bound} on the
probability that $u$ is affected decreases. The key insight behind
this not necessarily intuitive statement is that this probability must
be the same for {\em all} nodes in $S$ (due to their symmetry in the
graph). Therefore, a given probability will generate more expected
changes as $S$ grows, and therefore, to keep the expected changes
below the $k$ threshold, this bound on this probability must decrease
as $S$ grows.

\begin{lemma} \label{lem:pairingsmooth} Consider any graph
  $G=(V,E) \in allowed({\cal G}_{pair})$ and constant $\delta > 1$.
  Let $S\subseteq V$ be a set of nodes in $G$ such that: (1) all nodes
  in $S$ are of the same type (matched or unmatched), and (2)
  $|S| \geq n/\delta$.  Consider any node $u\in S$ and smoothing
  factor $k < n/(2\cdot \delta)$.  Let $G'$ be the result of
  $k$-smoothing $G$.  The probability that $u$'s adjacency list is
  different in $G'$ as compared to $G$ is no more than
  $(2\cdot \delta\cdot k)/n$.
\end{lemma}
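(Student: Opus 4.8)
The plan is to combine the symmetry among the nodes of $S$ with a short averaging argument. Write $p$ for the probability that $u$'s adjacency list in $G'$ differs from its adjacency list in $G$, and call a node \emph{affected} when this happens to it. The first step is to show that \emph{every} node $v \in S$ is affected with exactly this same probability $p$. This holds because $k$-smoothing draws $G'$ uniformly from $editdist(G,k) \cap allowed({\cal G}_{pair})$, and that distribution is invariant under every automorphism of $G$: the set $editdist(G,k)$ is automorphism-invariant by definition, while $allowed({\cal G}_{pair})$ is just the set of all matchings on $V$ and is therefore invariant under \emph{every} permutation of $V$. Now, given two nodes $v, v' \in S$ of the common type, there is an automorphism $\pi$ of $G$ with $\pi(v) = v'$: if both are unmatched, take the transposition $(v\ v')$; if both are matched, say $v$ to $v_1$ and $v'$ to $v_1'$, take $(v\ v')(v_1\ v_1')$ (or just $(v\ v')$ when $v$ and $v'$ are matched to each other), which permutes the edges of the matching among themselves. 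Since $H \mapsto \pi(H)$ preserves the law of $G'$ and carries the event ``$v$ is affected'' onto ``$v'$ is affected'', those events are equiprobable, so all nodes of $S$ are affected with probability $p$.

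The second step is to estimate the expected number of affected nodes of $S$ in two ways. Let $X$ be the number of affected nodes in $S$. By linearity of expectation and the first step, $\E[X] = |S| \cdot p$. On the other hand, since $G' \in editdist(G,k)$ the symmetric difference $E \triangle E'$ has at most $k$ edges, and a node is affected only if it is an endpoint of one of those edges; hence in \emph{every} outcome at most $2k$ nodes of $V$ are affected, so $X \le 2k$ deterministically and thus $\E[X] \le 2k$. Putting the two estimates together gives $|S| \cdot p \le 2k$, and since $|S| \ge n/\delta$ we conclude $p \le 2k/|S| \le 2\delta k/n$, which is the claimed bound. (The hypothesis $k < n/(2\delta)$ is only needed to ensure this bound is strictly below $1$, hence non-vacuous.)

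I expect the only delicate point to be the symmetry claim, where one must verify that the sampling distribution is genuinely invariant under the automorphisms being used; this is precisely where the pairing type matters, since $allowed({\cal G}_{pair})$ imposes no constraint beyond ``being a matching'' and is thus fully symmetric under relabeling of nodes. Everything else is routine counting.
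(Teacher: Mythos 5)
Your proof is correct and follows essentially the same route as the paper's: symmetry makes every node of $S$ equally likely to be affected, and the deterministic bound of $2k$ on the number of affected nodes forces $|S|\cdot p \le 2k$, hence $p \le 2\delta k/n$. The only difference is cosmetic --- you argue directly rather than by contradiction, and you justify the symmetry step via explicit automorphisms where the paper simply asserts it.
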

\begin{proof}
  For a given $w\in S$, let $p_w$ be the probably that $w$'s adjacency
  list is different in $G'$ as compared to $G$.  Assume for
  contradiction that $p_u > (2\cdot \delta\cdot k)/n$.  We first note
  that all nodes in $S$ are of the same type and therefore their
  probability of being changed must be symmetric. Formally, for every
  $u,v\in S$: $p_u = p_v$.

  For each $w\in S$, define $X_w$ to be the indicator random variable
  that is $1$ if $w$'s adjacency list changes in a particular
  selection of $G'$, and otherwise evaluates to $0$.  Let
  $Y=\sum_{w\in S}X_w$ be the total number of nodes in $S$ that end up
  changed in $G'$.  Leveraging the symmetry identified above, we can
  lower bound this expectation:

  \[  \mathbb{E}(Y) =  \mathbb{E}(\sum_{w\in S}X_w) = \sum_{w\in S} \mathbb{E}(X_w) = \sum_{w\in S} p_w = \sum_{w\in S} p_u > \frac{|S| \cdot 2\cdot  \delta\cdot k}{n} \geq 2k.\]

  \noindent Let $Z$ be the constant random variable that always
  evaluates to $2k$. We can interpret $Z$ as an upper bound on the
  total number of nodes that are affected by changes occurring in
  $G'$. (By the definition of $k$-smoothing there are at most $k$ edge
  changes, and each change affects two nodes.)  Because $Y$ counts the
  nodes affected by changes in $G'$, it follows that in all trials
  $Y \leq Z$.  The monotonicity property of expectation therefore
  provides that $\mathbb{E}(Y) \leq \mathbb{E}(Z) = 2k$.  We
  established above, however, that under our assumed $p_u$ bound that
  $\mathbb{E}(Y) > 2k$.  This contradicts our assumption that
  $p_u > (2\cdot \delta \cdot k)/n$.
\end{proof}

\section{Flooding}
\label{sec:flooding}

Here we consider the performance of a basic flooding process in a
connected dynamic network.  In more detail, we assume a single {\em
  source} node starts with a message.  In every round, every node that
knows the message broadcasts the message to its neighbors.  (Flooding
can be trivially implemented in a connected network type due to
reliable communication.)  We consider the flooding process {\em
  complete} in the first round that every node has the message.
Without smoothing, this problem clearly takes $\Omega(n)$ rounds in
large diameter static graphs, so a natural alternative is to state
bounds in terms of diameter.  Unfortunately, there exist dynamic
graphs (e.g., the spooling graph defined below) where the graph in
each round is constant diameter, but flooding still requires
$\Omega(n)$ rounds.

We show that this $\Omega(n)$ lower bound is somewhat fragile by
proving a polynomial improvement with any smoothing.  Specifically, we
show an upper bound of $O(n^{2/3}\log(n)/k^{1/3})$ rounds, with high
probability, with $k$-smoothing.  We also exhibit a nearly matching
lower bound by showing that the dynamic spooling graph requires
$\Omega(n^{2/3}/k^{1/3})$ rounds with constant probability.

\subsection{Lower Bound}

We build our lower bound around the dynamic {\em spooling graph},
defined as follows.  Label the nodes from $1$ to $n$, where node~$1$
is the source.  The spooling graph is a dynamic graph where in each
round $r$, the network is the $\min\set{r,n-1}$-spool graph.  We
define the {\em $i$-spool graph}, for $i\in [n-1]$ to be the graph
consisting of: a star on nodes $\set{1,\ldots,i}$ centered at $i$
called the {\em left spool}, a star on nodes $\set{i+1,\ldots,n}$
centered on $i+1$ called the {\em right spool}, and an edge between
the two centers $i$ and $i+1$.  We call $i+1$ the {\em head} node.

With node 1 as the source node, it is straightforward to see that, in
the absence of smoothing, flooding requires $n-1$ rounds to complete
on the spooling network. (Every node in the left spool has the message
but every node in the right spool does not.  In each round, the head
node receives the message then moves to the left spool.)  We
generalize this lower bound to smoothing.  The main idea is that in
order for every node to receive the message early, one of the early
heads must be adjacent to a smoothed edge.

\begin{theorem} \label{thm:floodinglower}
  Consider the flooding process on a $k$-smoothed $n$-vertex spooling
  graph, with $k\leq \sqrt{n}$ and sufficiently large $n$.  With
  probability at least $1/2$, the flooding process does not complete
  before the $\Omega(n^{2/3}/k^{1/3})$-th round.
\end{theorem}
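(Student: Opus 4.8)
The plan is to fix a threshold $T_0 = c\,n^{2/3}/k^{1/3}$ for a small absolute constant $c>0$, write $M_r$ for the set of nodes holding the message at the start of round $r$, and show $\Pr[M_{T_0}\neq V]\ge 1/2$ (i.e.\ flooding has not completed before round $T_0$). I would begin by recording the easy structure of the spool. In the $r$-spool graph the only edge leaving $\{1,\dots,r\}$ is the center edge $\{r,r+1\}$, node $r+1$ is the right-spool center, and every node of index $\ge r+2$ is a degree-one leaf attached to $r+1$. Hence without smoothing $M_{r+1}=\{1,\dots,r+1\}$ (the frontier advances by exactly one per round, the folklore $n-1$ bound), and in a smoothed round $r$ a leaf $b\ge r+2$ becomes informed only if either (i) the center $r+1$ already lies in $M_r$ --- in which case \emph{every} remaining node is informed at once and flooding finishes --- or (ii) smoothing adds an edge joining $b$ to some node of $M_r$. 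In particular, informing a far-right leaf is useless on its own: it cannot forward the message until it itself becomes a spool center, which happens only in round $b-1$.

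Call a node \emph{ahead of schedule} in round $r$ if it lies in $M_r$ but has index larger than $r$, and let $\tau$ be the first round $r$ with $r+1\in M_r$ (set $\tau=\infty$ if there is none). Since $\tau>T_0$ already forces $T_0+1\notin M_{T_0}$, hence $M_{T_0}\neq V$, it suffices to prove $\Pr[\tau\le T_0]\le 1/2$. Assuming $\tau\le T_0$, I would start from $v_0:=\tau+1\in M_\tau$, which was informed strictly ahead of schedule, and trace backwards: whenever a node $v_j$ became informed in some round $\rho_j$ because its right-spool center $\rho_j+1$ was already in $M_{\rho_j}$, move to $v_{j+1}:=\rho_j+1$, which was then itself informed strictly ahead of schedule in an earlier round and has index strictly smaller than $v_j$. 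Since indices strictly decrease and nodes $1$ and $2$ are never ahead of schedule, the trace terminates at a node $w$ with $3\le w\le v_0\le T_0+1$, informed in some round $\rho<\tau$ \emph{not} via its center but via an edge that smoothing added between $w$ and some node $a\in M_\rho$ (and $\{w,a\}$ is absent from the $\rho$-spool graph, since $w$'s only neighbor there is the center $\rho+1\neq a$). Thus $\tau\le T_0$ implies the event $\mathcal E$: in some round $\rho\le T_0$ with $\rho<\tau$, smoothing adds an edge between a node of index in $[\rho+2,\,T_0+1]$ and a node of $M_\rho$.

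To bound $\Pr[\mathcal E]$ I need an upper bound on $|M_\rho|$ for $\rho<\tau$. Here $M_\rho\subseteq\{1,\dots,\rho\}\cup L_\rho$, where $L_\rho$ is the set of nodes informed strictly ahead of schedule by round $\rho$, and every node of $L_{\rho+1}\setminus L_\rho$ is a leaf that acquired in round $\rho$ a previously-absent smoothed edge to $M_\rho$. By \lemref{connub}, each fixed absent edge appears in a given round's smoothed graph with probability at most $c_3k/n^2$, and rounds are smoothed independently, so $\mathbb E\!\left[|L_{\rho+1}|-|L_\rho|\ \middle|\ \text{history}\right]\le c_3 k|M_\rho|/n\le c_3 k(\rho+|L_\rho|)/n$. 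Unrolling this recursion up to round $T_0$ and using $k\le\sqrt n$ (which makes $kT_0/n\le c$, so the unrolling factor is $O(1)$) gives $\mathbb E\bigl[|L_{\min(\tau,T_0)}|\bigr]=O(kT_0^2/n)$; by Markov there is $\Lambda=\Theta(kT_0^2/n)$ with $\Pr\bigl[|L_{\min(\tau,T_0)}|>\Lambda\bigr]\le 1/4$, and $k\le\sqrt n$ also forces $\Lambda\le T_0$. On the complementary event $|M_\rho|\le\rho+\Lambda\le 2T_0$ for every $\rho<\tau$, so the set of edges whose addition in round $\rho$ would trigger $\mathcal E$ has size at most $(T_0+1)\cdot 2T_0=O(T_0^2)$; applying \lemref{connub} to that set and union-bounding over the $T_0$ rounds gives $\Pr[\mathcal E,\ |L_{\min(\tau,T_0)}|\le\Lambda]=O(kT_0^3/n^2)=O(c^3)\le 1/4$ once $c$ is small enough. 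Hence $\Pr[\tau\le T_0]\le\Pr[\mathcal E]\le 1/2$, and $\Pr[M_{T_0}\neq V]\ge 1/2$.

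The spool bookkeeping and the per-edge estimates from \lemref{connub} are routine. The delicate points --- where I expect the real work --- are two. First, making the backward-trace argument airtight, so that ``flooding finishes by round $T_0$'' genuinely reduces to ``one added edge reaches a node of index $O(T_0)$'' and cannot be achieved more cheaply by a long chain of added edges running through high-index leaves; the reduction above handles this by only ever asking whether a leaf gets an edge to the full informed set $M_\rho$, but one must be careful that $M_\rho$ is then controlled. Second, establishing that the number of ahead-of-schedule leaves grows like $kT^2/n$ rather than like $kT$ --- the point being that a random added edge touches $M_\rho$, of size $O(\rho)$, only with probability $O(\rho/n)$. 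It is precisely this quadratic-in-$T$ growth, combined with a union bound over $O(T)$ rounds and $O(T^2)$ candidate edges, that produces the $kT^3/n^2$ failure probability and hence the exponent in $T_0=\Theta(n^{2/3}/k^{1/3})$.
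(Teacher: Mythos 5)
Your proposal is correct and follows essentially the same route as the paper's proof: your event $\mathcal E$ (a smoothed edge reaching a node of index at most $T_0+1$ before the head catches up) and your bound $\mathbb E[|L|]=O(kT_0^2/n)$ on ahead-of-schedule nodes correspond exactly to the paper's ``case (3)'' and ``case (2)'' failure events, each bounded by $1/4$ via the same Lemma~\ref{lem:connub} union bound and Markov computations, with $k\le\sqrt n$ entering in the same place. The backward trace and the recursion for $|L_\rho|$ are just slightly more formal packagings of the paper's conditioning on ``not yet failed cases (2) or (3).''
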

\begin{proof}
  Consider the first $R-1=\delta n^{2/3}/k^{1/3}$ rounds of flooding
  on the spooling graph, where $\delta\leq 1$ is a constant to be
  tuned later.  Let $C$ be the set of nodes with ids from $1$ to $R$.
  These are the nodes that will play the role of head node during one
  of the $R-1$ rounds.  Our goal is to argue that, with constant
  probability, there is at least one node that does not receive the
  message during these rounds.

  There are three ways a node can learn the message during these $R-1$
  rounds: (1) it is the head and learns the message from the center of
  the left spool; (2) it is in the right spool and there is a smoothed
  edge between this node and a node with the message; (3) it is in the
  right spool and the current head node already has the message.  Case
  (1) is the easiest---the nodes $C$ receive the message this way.  We
  next bound the other two cases.  We say that we fail case~(2) if
  more than $R$ nodes receive the message by case~(2).  We say that we
  fail case~(3) if case (3) ever occurs in the first $R$ rounds.  We
  shall show that either case fails with probability at most $1/4$,
  and hence the probability of more than $2R$ nodes receiving the
  message is at most $1/2$.

  We first bound the probability of case (3) occurring.  This case can
  only happen if one of the nodes in $C$ received the message early
  due to a smoothed edge.  Consider each round $r$ during which we
  have not yet failed cases (2) or (3): i.e., at most $2R$ nodes in
  total have the message, but none of the nodes with ids between $r+1$
  and $R$ have it.  From \lemref{connub}, the probability of adding
  an edge between a node with the message and one of the relevant
  heads (those nodes in $C$) is
  $\leq c_3 k(2R)|C| / n^2 = O(kR^2/n^2) = O(\delta^2
  k^{1/3}/n^{2/3})$.
  Taking a union bound over all $R$ rounds, we get that the
  probability of failing case~(3) is at most
  $O(\delta^2 R k^{1/3}/n^{2/3}) = O(\delta^3)$.  For small enough
  $\delta$, this is at most $1/4$.

  We next argue that for small enough constant $\delta$, we fail case
  (2) with probability at most $1/4$.  To prove the claim, consider a
  round $r < R$ and suppose that we have not yet failed cases (2) or
  (3).  Thus at most $2R$ nodes have the message, and the probability
  that a specific node receives the message by case (2) in round $r$
  is at most $c_3 k (2R)/n^2=O(kR/n^2)$ by \lemref{connub}.  Thus, by
  linearity of expectation, the expected number of nodes receiving the
  message in round $r$ is $O(kR/n)$.  Summing over all $R-1$ rounds,
  the expected total number of nodes that learn the message this way
  is
  $O(kR^2/n) = O(k(\delta n^{2/3}/k^{1/3})^2/n) =
  O(\delta^2n^{1/3}k^{1/3}) = O(\delta R k^{2/3}/n^{1/3}) = O(\delta
  R)$
  for $k=O(\sqrt{n})$.  Thus, for $\delta$ small enough, the expected
  total number of nodes that receive the message by case~(2) is $R/4$.
  Applying Markov's inequality, the probability that more than $R$
  nodes receive the message is at most $1/4$.  We thus conclude that
  we fail case (2) with probability at most $1/4$.
\end{proof} 
 
\subsection{An $O(n^{2/3}\log{n} / k^{1/3})$ Upper Bound for General
  Networks}
 
We now show that flooding in {\em every} $k$-smoothed network will
complete in $O(n^{2/3}\log{n}/k^{1/3})$ time, with high
probability. When combined with the $\Omega(n^{2/3}/k^{1/3})$ lower
bound from above, this shows this analysis to be essentially tight for
this problem under smoothing.

{\bf Support Sequences.} The core idea is to show that every node in
every network is supported by a structure in the dynamic graph such
that if the message can be delivered to {\em anywhere} in this
structure in time, it will subsequently propagate to the target.  In
the spooling network, this structure for a given target node $u$
consists simply of the nodes that will become the head in the rounds
leading up to the relevant complexity deadline.  The {\em support
  sequence} object defined below generalizes a similar notion to all
graphs.  It provides, in some sense, a fat target for the smoothed
edges to hit in their quest to accelerate flooding.

\begin{definition}
  Fix two integers $t$ and $\ell$, $1 \leq \ell < t$, a dynamic graph
  $\mathcal H = G_1, \dots, G_t$ with $G_i=(V,E_i)$ for all $i$, and a
  node $u\in V$.  A {\em $(t,\ell)$-support sequence} for $u$ in $G$
  is a sequence $S_0,S_1,S_2,...,S_{\ell}$, such that the following
  properties hold: (1) For every $i\in [0,\ell]$: $S_i \subseteq V$.
  (2) $S_0 = \{u\}$. (3) For every $i\in [1,\ell]$:
  $S_{i-1} \subset S_i$ {\em and} $S_i \setminus S_{i-1} = \{v\}$, for
  some $v\in V$ such that $v$ is adjacent to at least one node of
  $S_{i-1}$ in $G_{t-i} $.
\end{definition}
Notice that the support structure is defined ``backwards'' with $S_0$
containing the target node $u$, and each subsequent step going one
round back in time.  We prove that every connected dynamic graph has
such a support structure, because the graph is connected in every
round.

\begin{lemma}
  Fix some dynamic graph $\mathcal H \in \mathcal G_{conn}$ on vertex
  set $V$, some node $u \in V$, and some rounds $t$ and $\ell$, where
  $1 \leq \ell < t$.  There exists a $(t,\ell)$-support sequence for
  $u$ in $\mathcal H$.
\label{lem:support}
\end{lemma}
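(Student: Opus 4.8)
The plan is to construct the support sequence $S_0, S_1, \dots, S_\ell$ greedily, one element at a time, moving backwards through the rounds $t-1, t-2, \dots, t-\ell$. I would start with $S_0 = \{u\}$, as forced by property (2). For the inductive step, suppose $S_{i-1}$ has been built (with $|S_{i-1}| = i$); I need to find a vertex $v \in V \setminus S_{i-1}$ adjacent in $G_{t-i}$ to at least one node of $S_{i-1}$, and set $S_i = S_{i-1} \cup \{v\}$. Such a $v$ exists precisely because $G_{t-i}$ is connected: since $S_{i-1}$ is a nonempty proper subset of $V$ (it is proper as long as $|S_{i-1}| = i \leq \ell < t \leq n$, so we never exhaust $V$), connectivity forces at least one edge of $G_{t-i}$ to cross the cut $(S_{i-1}, V \setminus S_{i-1})$, and the endpoint of that edge outside $S_{i-1}$ is the desired $v$.

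The three properties are then immediate from the construction: property (1) holds since each $S_i \subseteq V$; property (2) is the base case; property (3) holds at each step because $S_{i-1} \subset S_i$ with $S_i \setminus S_{i-1} = \{v\}$ a single vertex, and $v$ was chosen to be adjacent in $G_{t-i}$ to a node of $S_{i-1}$ (note $G_{t-i}$ is well-defined since $1 \leq i \leq \ell < t$, so $t-i \geq 1$). I would also remark that $G_{t-i} \in \mathcal G_{conn}$ guarantees connectivity of each graph used, which is exactly the hypothesis $\mathcal H \in \mathcal G_{conn}$.

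There is no real obstacle here; the only point needing a word of care is confirming that $S_{i-1}$ is always a \emph{proper} subset of $V$ so that the cut argument applies and the process does not get stuck. Since $|S_i| = i+1$ and we only run the construction for $i$ up to $\ell$, and $\ell < t \leq |\mathcal H|$, we have $|S_{i-1}| = i \leq \ell < t$; as long as $t \leq n$ this gives $|S_{i-1}| < n = |V|$. (If one wants to allow $t > n$, note that $\ell$ can be taken at most $n-1$ anyway, since a strictly increasing chain of subsets of $V$ has length at most $n$; in any regime where a $(t,\ell)$-support sequence is meaningfully requested we have $\ell \leq n-1$.) I would present this as a short induction and state explicitly that it uses nothing about $\mathcal H$ beyond per-round connectivity.
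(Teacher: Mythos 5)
Your proposal is correct and follows essentially the same route as the paper: a greedy backward construction that at each step uses connectivity of $G_{t-i}$ to find a vertex outside $S_{i-1}$ adjacent to it. The only difference is that you spell out the (minor) point that $S_{i-1}$ must remain a proper subset of $V$, which the paper leaves implicit.
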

\begin{proof}
  We can iteratively construct the desired support sequence.  The key
  observation to this procedure is the following: given any
  $S_i\subset V$ and static connected graph $G$ over $V$, there exists
  some $v\in V \setminus S_i$ that is connected to at least one node
  in $S_i$.  This follows because the absence of such an edge would
  imply that the subgraph induced by $S_i$ is a disconnected component
  in $G$, contradicting the assumption that it is connected.  With
  this property established, constructing the $(t,\ell)$-support
  sequence is straightforward: start with $S_0 = \{u\}$ and apply the
  above procedure again and again to the graph at times
  $t-1,t-2,...,t-\ell$ to define $S_1,S_2,...,S_{\ell}$ as required by
  the definition of a support sequence.
\end{proof}

The following key lemma shows that over every period of
$\Theta(n^{2/3}/k^{1/3})$ rounds of $k$-smoothed flooding, every node
has a constant probability of receiving the message.  Applying this
lemma over $\Theta(\log n)$ consecutive time intervals with a Chernoff
bound, we get our main theorem.

\begin{lemma}\label{lem:constantflood}
  There exists constant $\alpha \geq 3$ such that the following holds.
  Fix a dynamic graph $\mathcal H \in \mathcal G_{conn}$ on vertex set
  $V$, any node $u \in V$, and a consecutive interval of
  $\alpha n^{2/3}/k^{1/3}$ rounds.  For smoothing value $k \leq n/16$,
  node $u$ receives the flooded message in the $k$-smoothed version of
  $\mathcal H$ with probability at least~$1/2$.
\end{lemma}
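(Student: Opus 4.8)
The plan is to combine the support sequence of \lemref{support} with a two-phase analysis of $k$-smoothed flooding over the given interval. Write $L=\alpha n^{2/3}/k^{1/3}$, let the interval be rounds $a,a+1,\dots,a+L-1$, and set $t=a+L-1$ and $\ell=\lfloor L/4\rfloor$. (If $L\ge n-1$ then plain flooding already finishes within the interval in any connected dynamic graph, so we may assume $n$ is large enough that $L<n-1$ and $L\ge 8$.) First invoke \lemref{support} to fix a $(t,\ell)$-support sequence $S_0=\{u\}\subset S_1\subset\cdots\subset S_\ell$ for $u$ in $\mathcal H$; here $|S_j|=j+1$ and $S_j\setminus S_{j-1}=\{v_j\}$, where $v_j$ is joined to some $w_j\in S_{j-1}$ by an edge $e_j\in E_{t-j}$. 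The deterministic core is the \emph{propagation claim}: on the event $\mathcal E_0$ that every $e_j$ survives smoothing (i.e.\ $e_j\in E'_{t-j}$), if some node of $S_j$ holds the message at the start of round $t-j$, then $u$ holds it by the end of round $t$. This follows by downward induction on $j$: the base case $j=0$ is immediate, and in the inductive step either a holder already lying in $S_j\cap S_{j-1}$ persists into round $t-(j-1)$, or $v_j$ broadcasts across the surviving edge $e_j$ in round $t-j$, placing a holder into $S_{j-1}$ by the start of round $t-(j-1)$.

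Next, since $k$-smoothing with respect to $\mathcal G_{conn}$ never disconnects a graph, flooding reaches at least one new node per round; hence after the first $\lceil L/2\rceil$ rounds of the interval either every node (in particular $u$) has the message---and we are done---or at least $\lceil L/2\rceil\ge\ell$ nodes do, in which case fix an arbitrary set $M$ of $\ell$ such holders. The second phase uses $M$ to strike the support sets. For $j\in[1,\ell]$ set $r_j=t-j-1$; for $n$ large these rounds are distinct, lie strictly after the first $\lceil L/2\rceil$ rounds (so every node of $M$ already holds the message at the start of round $r_j$), and satisfy $r_j\le t-2$. Let $H_j$ be the event that $M\cap S_j\ne\emptyset$, or that the $k$-smoothing of $G_{r_j}$ adds at least one of the $|M|\cdot|S_j|$ potential edges between $M$ and $S_j$; in either case a node of $S_j$ holds the message at the start of round $t-j$, so on $\mathcal E_0\cap H_j$ the propagation claim delivers the message to $u$. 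By \lemref{connlb} (which applies since $k\le n/16$ and, for $n$ large, $k|M|\cdot|S_j|\le k\ell(\ell+1)\le n^2/2$), conditioned on the first-phase randomness that fixes $M$ we have $\Pr[\overline{H_j}]\le 1-c_1 k\ell(j+1)/n^2$ when $M\cap S_j=\emptyset$, and $\Pr[\overline{H_j}]=0$ otherwise.

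Since $k$-smoothing is performed independently in each round and $r_1,\dots,r_\ell$ are distinct rounds later than the first phase, the events $H_1,\dots,H_\ell$ are mutually independent conditioned on the first-phase randomness, so
\[
  \Pr\Big[\,\bigcap_{j=1}^{\ell}\overline{H_j}\ \Big|\ \text{first phase}\,\Big]\ \le\ \prod_{j=1}^{\ell}\Big(1-\frac{c_1 k\ell(j+1)}{n^2}\Big)\ \le\ \exp\!\Big(-\frac{c_1 k\ell}{n^2}\sum_{j=1}^{\ell}(j+1)\Big)\ \le\ \exp\!\Big(-\frac{c_1 k\ell^3}{2n^2}\Big).
\]
Because $\ell=\Theta(n^{2/3}/k^{1/3})$ we have $k\ell^3/n^2=\Theta(\alpha^3)$, so choosing the constant $\alpha$ large enough (and at least $3$) makes this probability at most $1/4$; combined with the trivial case this gives $\Pr[\text{everyone has the message after the first phase, or some }H_j\text{ occurs}]\ge 3/4$. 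Separately, a union bound over $e_1,\dots,e_\ell$ via \lemref{connmissing} gives $\Pr[\overline{\mathcal E_0}]\le c_2 k\ell/n^2=O(k^{2/3}/n^{4/3})\le 1/4$ for $n$ large. Since ``everyone has the message after the first phase'' and ``$\mathcal E_0$ together with some $H_j$'' each imply that $u$ receives the message by the end of round $t$, a final union bound yields probability at least $1-\tfrac14-\tfrac14=\tfrac12$.

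The step I expect to be the crux is calibrating the parameters so a single constant $\alpha$ works uniformly for all $k\le n/16$: the interval length $n^{2/3}/k^{1/3}$ is simultaneously (i) long enough for the first phase to accumulate $\Theta(n^{2/3}/k^{1/3})$ holders, (ii) long enough that these $\Theta(n^{2/3}/k^{1/3})$ holders, striking support sets of comparable size over $\Theta(n^{2/3}/k^{1/3})$ rounds, succeed with constant probability (this is precisely where $k\ell^3/n^2=\Theta(1)$ comes from), and (iii) short enough that \lemref{connlb}'s hypothesis $k|S|\le n^2/2$ is respected---which forces $M$ to be capped at $\ell$ rather than using all current holders. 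A secondary subtlety is the probabilistic bookkeeping: per-round independence of $k$-smoothing is what licenses the product bound over the $H_j$, whereas $\mathcal E_0$ shares rounds with the $H_j$ and so must be absorbed through a union bound rather than multiplied in.
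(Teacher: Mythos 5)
Your proof is correct, and it follows the same skeleton as the paper's: build a $(t,\ell)$-support sequence via Lemma~\ref{lem:support}, use Lemma~\ref{lem:connmissing} to argue the $\ell$ critical support edges all survive smoothing except with probability $O(k^{2/3}/n^{4/3})\le 1/4$, use connectivity to accumulate $\Theta(n^{2/3}/k^{1/3})$ holders in the first part of the interval, and use Lemma~\ref{lem:connlb} to get a smoothed edge from the holders into the support structure with constant probability. The one place you genuinely diverge is the ``hitting'' step: the paper aims all $|T|\,|S_\ell|$ potential edges at the single largest set $S_\ell$ over a window of $(\alpha-2)n^{2/3}/k^{1/3}$ middle rounds, getting failure probability $e^{-c_1(\alpha-2)}$, whereas you aim $M\times S_j$ at the one round $t-j-1$ for each $j$ and multiply over $j$, getting failure probability $e^{-\Theta(\alpha^3)}$. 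Both calibrations work; the paper's is slightly simpler (one target set, no need to track which specific round succeeds), while yours exploits more of the support structure and makes the independence bookkeeping explicit --- in particular your observation that $\mathcal E_0$ shares rounds with the $H_j$ and must be handled by a union bound rather than folded into the product is exactly the right care to take, and is implicit in the paper's two-part argument as well.
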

\begin{proof}
  Let $t=\alpha n^{2/3}/k^{1/3}$, 
  and let $\ell=n^{2/3}/k^{1/3}$.
  Let ${\cal S} = S_0,S_1,\ldots,S_{\ell}$ be a $(t,\ell)$-support
  sequence for $u$ in $G$. (By Lemma~\ref{lem:support}, we know such a
  structure exists.) The key claim in this proof is the following:
 
  \smallskip
 
  {\em (*) If a node in $S_{\ell}$ receives the broadcast message by
    round $t-\ell$, then $u$ receives the broadcast message by round
    $t$ with probability at least $3/4$.}
 
  \smallskip
 
  To establish this claim we first introduce some further
  notation. Let $v_i$, for $i\in [\ell]$, be the single node in
  $S_i \setminus S_{i-1}$, and let $v_0 = u$.

  We will show by (reverse) induction the following invariant: for
  every $i \in [0, \ell]$, by the beginning of round $t-i$ there is
  some node $v_j$ with $j\leq i$ that has received the broadcast
  message.  For $i=0$, this implies that node $u = v_0$ has received
  the message and we are done.  The base case, where $i = \ell$,
  follows by the assumption that some node in $S_{\ell}$ receives the
  broadcast message prior to round $t-\ell$.

  Assume that the invariant holds for $i+1$; we will determine the
  probability that it holds for $i$.  That is, by the beginning of
  round $t-(i+1)$, there is some node $v_j$ for $j \leq i+1$ that has
  received the broadcast message.  If $j \leq i$, then the invariant
  is already true for $i$.  Assume, then, that $j=i+1$, i.e.,
  $v_{i+1}$ has received the message by the beginning of round
  $t-i-1$. By the definition of the $(t,\ell)$-support sequence, node
  $v_{i+1}$ is connected by an edge $e$ to some node in $S_i$, i.e.,
  to some node $v_{j'}$ for $j' \leq i$.  Thus if the specified edge
  $e$ is not removed by smoothing, then by the beginning of round
  $t-i$, there is some node $v_{j'}$ for $j' \leq i$ that has received
  the message.

  The probability that edge $e$ is removed by smoothing is at most
  $c_2 k / n^2$ (by Lemma~\ref{lem:connmissing}), for some constant
  $c_2$.  By taking a union bound over the $\ell = n^{2/3}/k^{1/3}$
  steps of the induction argument, we see that the claim fails with
  probability at most $c_2 k^{2/3} / n^{4/3} \leq 1/4$ for $k\leq n$
  and sufficiently large $n$.  This proves the claim.
 
  Now that we have established the value of getting a message to
  $S_\ell$ by round $t-\ell$ we are left to calculate the
  probability that this event occurs.  To do so, we first note that
  after $n^{2/3}/k^{1/3}$ rounds, there exists a set $T$ containing at
  least $n^{2/3}/k^{1/3}$ nodes that have the message. This follows
  because at least one new node must receive the message after every
  round (due to the assumption that
  $\mathcal H \in \mathcal G_{conn}$).
 
  There are therefore $|T||S_\ell|$ possible edges between $T$ and
  $S_\ell$, and if any of those edges is added via smoothing after
  round $n^{2/3}/k^{1/3}$ and before round
  $(\alpha-1)n^{2/3}/k^{1/3}$, then the target $u$ has a good chance
  of receiving the message.  In each such round, by
  Lemma~\ref{lem:connlb}, such an edge addition occurs with
  probability at least 
  $c_1 k |T||S_\ell|/n^2 \geq c_1 k^{1/3}/n^{2/3}$.

  Thus the probability that the message does not get to a node in
  $S_\ell$ during any of these $(\alpha-2)n^{2/3}/k^{1/3}$ rounds is
  $(1 - c_1 k^{1/3}/n^{2/3})^{(\alpha-2)n^{2/3}/k^{1/3}} \leq e^{- c_1
    (\alpha-2)}$.
  For a proper choice of $\alpha$, this shows that such an edge is
  added in at least one round between time $n^{2/3}/k^{1/3}$ and time
  $(\alpha - 1) n^{2/3}/k^{1/3}$ with probability at least $3/4$, and
  thus by time $t - \ell$ at least one node in $S_{\ell}$ has received
  the message.
 
  Putting the pieces together, we look at the probability of both
  events happening: the message reaches $S_{\ell}$ by round $t-\ell$,
  and the message propagates successfully through the support
  structure.  Summing the probabilities of error, we see that node $u$
  receives the message by time $t$ with probability at least $1/2$.
\end{proof}
 
\begin{theorem} \label{thm:floodingupper}
  For any dynamic graph $\mathcal H \in \mathcal G_{conn}$ and
  smoothing value $k\leq n/16$, flooding completes in
  $O(n^{2/3}\log{n}/k^{1/3})$ rounds on the $k$-smoothed version of
  $\mathcal H$ with high probability.
\end{theorem}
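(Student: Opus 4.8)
The plan is a standard probability-amplification argument layered on top of Lemma~\ref{lem:constantflood}. Let $\alpha \geq 3$ be the constant from that lemma and set $t = \alpha n^{2/3}/k^{1/3}$. I would partition the first $\beta t \log n$ rounds of the execution into $\beta \log n$ consecutive, disjoint blocks of length $t$ each, where $\beta$ is a constant fixed at the end in terms of the desired failure exponent. Fixing a target node $u$, the goal is to show that $u$ fails to receive the message in every one of these blocks with probability at most $n^{-(c+1)}$, and then to union-bound over the $n$ choices of $u$.

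The crux is that Lemma~\ref{lem:constantflood} can be ``restarted'' at the beginning of each block. Concretely, let $F_j$ be the event that $u$ still lacks the message at the end of block $j$; since flooding is monotone (a node that holds the message never loses it), $F_{\beta\log n} \subseteq F_j$ for all $j$, so $F_{\beta\log n} = \bigcap_{j} F_j$. I would argue that, conditioned on any history of the smoothing through the start of block $j$ (in particular on which nodes hold the message at that point), the conditional probability of $F_j$ is at most $1/2$: if $u$ already holds the message this is immediate, and otherwise the hypotheses of Lemma~\ref{lem:constantflood} are met on block $j$ alone, because the source has held the message since round~$1$ (so at least one node holds it at the start of the block, hence flooding reaches at least one new node per round until completion), and the lemma's internal estimates invoke only the smoothing randomness of rounds inside block $j$ together with the deterministic support sequence for $u$. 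Multiplying these conditional bounds gives $\Pr[F_{\beta\log n}] = \Pr[\bigcap_j F_j] \leq (1/2)^{\beta\log n} = n^{-\beta \log 2}$.

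Finally I would pick $\beta$ so that $\beta\log 2 \geq c+1$ and union-bound over the $n$ nodes: with probability at least $1-n^{-c}$ every node holds the message after $\beta t\log n = O(n^{2/3}\log n / k^{1/3})$ rounds, matching the claim. The side condition $k \leq n/16$ is exactly the one required by Lemma~\ref{lem:constantflood}, so nothing further is needed.

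I expect the only genuinely delicate point — the one worth spelling out rather than asserting — is the conditioning step that licenses multiplying the per-block failure probabilities. In particular, the set $T$ of message-holders at the start of a block is itself random and correlated with earlier smoothing, but it enters Lemma~\ref{lem:constantflood}'s analysis only through the inequality $|T| \geq \ell = n^{2/3}/k^{1/3}$, which holds deterministically; consequently the $1/2$ success guarantee is uniform over all histories, and the product bound is legitimate. Everything else is routine bookkeeping.
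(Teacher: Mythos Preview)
Your proposal is correct and follows essentially the same approach as the paper: apply Lemma~\ref{lem:constantflood} to $\Theta(\log n)$ consecutive disjoint intervals of length $\alpha n^{2/3}/k^{1/3}$, multiply the per-interval failure probabilities, and union-bound over nodes. The paper's proof is terser and does not spell out the conditioning argument you highlight, but your observation that the lemma's guarantee is uniform over histories (because the support sequence is built from the unsmoothed $\mathcal H$, the growth $|T|\geq \ell$ is deterministic under $\mathcal G_{conn}$-smoothing, and all probabilistic estimates use only within-block smoothing) is exactly what justifies the product bound.
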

\begin{proof}
   Fix a non-source node $u$. We know via
   Lemma~\ref{lem:constantflood} that in each time interval of length
   $\Theta(n^{2/3}/k^{1/3})$, node $u$ receives the message with
   probability at least $1/2$.  Thus, over $\Theta(\log{n})$ such
   intervals, $u$ receives the message with high probability.  A union
   bound of the $n-1$ non-source nodes yields the final result.
\end{proof}

\section{Random Walks} \label{sec:random-walks}

As discussed in \secref{intro}, random walks in dynamic graphs exhibit
fundamentally different behavior from random walks in static graphs.
Most notably, in dynamic graphs there can be pairs of nodes whose
hitting time is exponential~\cite{avin:2008}, even though in static
(connected) graphs it is well-known that the maximum hitting time is
at most $O(n^3)$~\cite{lovasz:1996}.  This is true even under obvious
technical restrictions necessary to prevent infinite hitting times,
such as requiring the graph to be connected at all times and to have
self-loops at all
nodes.  

We show that this lower bound is extremely fragile.  A very simple
argument shows that a small perturbation ($1$-smoothing) is enough to
guarantee that in \emph{any} dynamic graph, all hitting times are at
most $O(n^3)$.  Larger perturbations ($k$-smoothing) lead to
$O(n^3 / k)$ hitting times.  We also prove a lower bound of
$\Omega(n^{5/2} / \sqrt{k})$, using an example which is in fact a
static graph (made dynamic by simply repeating it).

\subsection{Preliminaries}
We begin with some technical preliminaries.  In a static graph, a
random walk starting at $u \in V$ is a walk on $G$ where the next node
is chosen uniformly at random from the set of neighbors on the current
node (possibly including the current node itself if there is a
self-loop).  The \emph{hitting time} $H(u,v)$ for $u,v \in V$ is the
expected number of steps taken by a random walk starting at $u$ until
it hits $v$ for the first time.  Random walks are defined similarly in
a dynamic graph $\mathcal H = G_1, G_2, \dots$: at first the random
walk starts at $u$, and if at the beginning of time step $t$ it is at
a node $v_t$ then in step $t$ it moves to a neighbor of $v_t$ in $G_t$
chosen uniformly at random.  Hitting times are defined in the same way
as in the static case.

The definition of the hitting time in a smoothed dynamic graph is
intuitive but slightly subtle.  Given a dynamic graph $\mathcal H$ and
vertices $u,v$, the \emph{hitting time from $u$ to $v$ under
  $k$-smoothing}, denoted by $H_k(u,v)$, is the expected number of
steps taken by a random walk starting at $u$ until first reaching $v$
in the (random) $k$-smoothed version $\mathcal H'$ of $\mathcal H$
(either with respect to $\mathcal G_{conn}$ or with respect to the set
$\mathcal G_{all}$ of all dynamic graphs).  Note that this expectation
is now taken over two independent sources of randomness: the
randomness of the random walk, and also the randomness of the
smoothing (as defined in \secref{smoothing}).

\subsection{Upper Bounds}

We first prove that even a tiny amount amount of smoothing is
sufficient to guarantee polynomial hitting times even though without
smoothing there is an exponential lower bound.  Intuitively, this is
because if we add a random edge at every time point, there is always
some inverse polynomial probability of directly jumping to the target
node.  We also show that more smoothing decreases this bound linearly.

\begin{theorem} \label{thm:randomub} In any dynamic graph
  $\mathcal H$, for all vertices $u,v$ and value $k \leq n / 16$, the
  hitting time $H_k(u,v)$ under $k$-smoothing (with respect to
  $\mathcal G_{all}$) is at most $O(n^3 / k)$.  This is also true for
  smoothing with respect to $\mathcal G_{conn}$ if
  $\mathcal H \in \mathcal G_{conn}$.
\end{theorem}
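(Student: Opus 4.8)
The plan is to show that in every single round, no matter where the random walk currently sits, there is probability $\Omega(k/n^3)$ that the next step lands directly on $v$; a standard geometric-tail estimate for the hitting time then finishes the proof.

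Fix a round $t$ and condition on the event that the walk is at some node $w\neq v$ entering round $t$. The key structural point is that the smoothing of the $t$-th static graph is performed independently of the smoothing of every other graph, and hence independently of the walk's position entering round $t$ (that position is a function only of $G'_1,\dots,G'_{t-1}$ and the walk's own coin flips). So conditioning on ``walk at $w$'' leaves the law of $G'_t$ unchanged, and it suffices to bound, unconditionally, the probability that $\{w,v\}$ is an edge of $G'_t$ and that the walk then follows it. If $\{w,v\}\notin E_t$, then \lemref{connlb} with $S=\{\{w,v\}\}$ (applicable since $k\le n/16$ and $k\card{S}=k\le n^2/2$) gives $\Pr[\{w,v\}\in E'_t]\ge c_1 k/n^2$; for $\mathcal G_{all}$ the same conclusion holds via the strictly easier version of that argument, in which there is no connectivity rejection step. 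If instead $\{w,v\}\in E_t$, then \lemref{connmissing} with $S=\{\{w,v\}\}$ shows the edge survives smoothing with probability at least $1-c_2 k/n^2\ge 1/2$. In either case $\Pr[\{w,v\}\in E'_t]=\Omega(k/n^2)$. Since $w$ has degree at most $n$ in $G'_t$ (at most $n-1$ proper neighbors plus a possible self-loop), conditioned on $\{w,v\}\in E'_t$ the walk moves from $w$ to $v$ with probability at least $1/n$; averaging over the smoothing, $\Pr[v_{t+1}=v\mid v_t=w]=\mathbb{E}\bigl[\mathbb{1}[\{w,v\}\in E'_t]/\deg_{G'_t}(w)\bigr]\ge (1/n)\Pr[\{w,v\}\in E'_t]=\Omega(k/n^3)$.

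Write $p=\Theta(k/n^3)$ for this uniform lower bound and let $\tau$ be the hitting time (the number of rounds until the walk first reaches $v$; we may assume $u\neq v$, the other case being trivial). The previous paragraph shows that for every $t$ and every history consistent with $\tau>t$, the walk fails to reach $v$ in round $t$ with probability at most $1-p$; hence $\Pr[\tau>t+1]\le(1-p)\Pr[\tau>t]$, so $\Pr[\tau>t]\le(1-p)^{t-1}$, and $H_k(u,v)=\mathbb{E}[\tau]=\sum_{t\ge0}\Pr[\tau>t]=O(1/p)=O(n^3/k)$. I do not anticipate a genuine obstacle here: the only delicate points are the independence of $G'_t$ from the walk's position entering round $t$, and the closely related bookkeeping that the per-round success probability $p$ holds conditionally on arbitrary past history (so the rounds behave like independent Bernoulli trials for the geometric tail bound). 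Verifying that the $\mathcal G_{all}$ analogues of \lemref{connlb} and \lemref{connmissing} hold is the remaining routine piece, and it is easier than the stated $\mathcal G_{conn}$ versions since there is no $allowed(\cdot)$ constraint to reject against.
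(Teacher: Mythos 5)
Your proof is correct and follows essentially the same route as the paper's: establish a uniform per-round probability of $\Omega(k/n^3)$ of stepping directly to $v$ (edge $\{w,v\}$ present in the smoothed graph with probability $\Omega(k/n^2)$ by the smoothing lemmas, then followed with probability at least $1/n$), and conclude a hitting time of $O(n^3/k)$. Your write-up is somewhat more careful than the paper's---in particular about the independence of the round-$t$ smoothing from the walk's position and the conditional geometric tail bound---but these are refinements of the same argument, not a different approach.
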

\begin{proof}
  Consider some time step $t$, and suppose that the random walk is at
  some node $w$.  If $\{w,v\}$ is an edge in $G_t$ (the graph at time
  $t$), then the probability that it remains an edge under
  $k$-smoothing is at least $\Omega(1)$ (this is direct for smoothing
  with respect to $\mathcal G_{all}$, or from Lemma~\ref{lem:connlb}
  for smoothing with respect to $\mathcal G_{conn}$).  If $\{w,v\}$ is
  not an edge in $G_t$, then the probability that $\{w,v\}$ exists due
  to smoothing is at least $\Omega(k / n^2)$ (again, either directly
  or from Lemma~\ref{lem:connub}).  In either case, if this edge does
  exist, the probability that the random walk takes it is at least
  $1/n$.  So the probability that the next node in the walk is $v$ is
  at least $\Omega(k / n^3)$.  Thus at every time step the probability
  that the next node in the walk is $v$ is $\Omega(k / n^3)$, so the
  expected time until the walk hits $v$ is $O(n^3 / k)$.
\end{proof}

A particularly interesting example is the \emph{dynamic star}, which
was used by Avin et al.~\cite{avin:2008} to prove an exponential lower
bound.  The dynamic star consists of $n$ vertices
$\{0,1, \dots, n-1\}$, where the center of the start at time $t$ is $t
\mod (n-1)$ (note that node $n-1$ is never the center).  Every node
also has a self loop.  Avin et al.~\cite{avin:2008}
proved that the hitting time from node $n-2$ to node $n-1$ is at least
$2^{n-2}$.
It turns out that this lower bound is particularly fragile -- not only
does Theorem~\ref{thm:randomub} imply that the hitting time is
polynomial, it is actually a factor of $n$ better than the global
upper bound due to the small degrees at the leaves.

\begin{theorem} \label{thm:randomstar} $H_k(u,v)$ is at most
  $O(n^2 / k)$ in the dynamic star for all $k \leq n / 16$ and for all
  vertices $u,v$ (where smoothing is with respect to
  $\mathcal G_{conn}$).
\end{theorem}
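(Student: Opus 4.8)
The plan is to reuse the one-step argument behind \thmref{randomub}, but to exploit that in the dynamic star every node other than the current center has degree $O(1)$ before smoothing: a walk sitting at such a node follows a freshly smoothed-in edge to $v$ with \emph{constant} probability rather than probability $\Theta(1/n)$, and this is precisely the factor-$n$ improvement we want. Concretely I would fix $u$ and $v$ and prove the uniform claim that for every round $t$ and every node $w \ne v$, conditioned on the walk being at $w$ at the start of round $t$, the walk is at $v$ at the start of round $t+1$ with probability $\Omega(k/n^2)$. Since distinct rounds are smoothed independently, this conditional probability does not depend on the walk's history, so the number of steps to reach $v$ is stochastically dominated by a geometric random variable with parameter $\Omega(k/n^2)$, giving $H_k(u,v) = O(n^2/k)$.

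To prove the one-step claim I would split on whether $w$ or $v$ equals the center $c_t$ of the unsmoothed star $G_t$ (note that $G_t \in allowed(\mathcal G_{conn})$, so Lemmas~\ref{lem:connlb} and~\ref{lem:connmissing} apply). If $w = c_t$ or $v = c_t$, then $\{w,v\} \in G_t$; by \lemref{connmissing} this edge survives $k$-smoothing with probability at least $1/2$, and whenever it survives the walk follows it with probability at least $1/\deg_{G'_t}(w)$. If $w = c_t$, this is $\Omega(1/n)$ since the center has degree $\Theta(n)$ even after adding at most $k \le n/16$ edges; if instead $v = c_t$, then $w$ is a leaf, so $\deg_{G'_t}(w) \le k+2$ and the probability is $\Omega(1/k)$. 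Either way, for $k \le n/16$ the one-step probability is $\Omega(k/n^2)$.

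The main case is $w \ne c_t$ and $v \ne c_t$, so $w$ and $v$ are both leaves of $G_t$ and $\{w,v\} \notin G_t$. Here the crude bound $\deg_{G'_t}(w) \le k+2$ yields a one-step probability of only $\Omega(1/n^2)$, losing the factor $k$; the point is that smoothing almost never concentrates many of its $k$ edge changes at $w$. By \lemref{connlb} applied with $S = \{\{w,v\}\}$, the smoothed graph $G'_t$ contains the edge $\{w,v\}$ with probability at least $c_1 k/n^2$. I would then control $\deg_{G'_t}(w)$ on this event. Using the description of $editdist(G_t,k)$ as toggling at most $k$ uniformly chosen potential edges and then conditioning on connectivity --- which costs only a constant factor, since $\Pr[\text{connected}] \ge 1/2$, exactly as in the proof of \lemref{connlb} --- each toggled edge is incident to $w$ with probability $O(1/n)$. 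First- and second-moment estimates on the toggles then give $\mathbb{E}\bigl[\mathbf 1(\{w,v\} \in G_D)\cdot\deg_{G_D}(w)\bigr] = O(k/n^2)$, so dividing by $\Pr[\text{connected}]$ and applying Markov's inequality produces an absolute constant $C$ with $\Pr\bigl[\{w,v\} \in G'_t \text{ and } \deg_{G'_t}(w) \le C\bigr] \ge \Omega(k/n^2)$. On that event the walk follows $\{w,v\}$ to $v$ with probability at least $1/C$, so the one-step probability is $\Omega(k/n^2)$, completing the claim and hence the theorem.

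I expect the degree control in the main case to be the only real obstacle. A priori nothing prevents smoothing from putting all $k$ of its edge changes at $w$, which would destroy the bound, so one genuinely has to argue that the expected number of toggled edges incident to $w$ stays $O(k/n) = O(1)$ even after conditioning both on the forced edge $\{w,v\}$ and on the graph remaining connected. Both conditionings cost only constant factors --- the connectivity one exactly as in the proofs of Lemmas~\ref{lem:connlb} and~\ref{lem:connmissing} --- so once this estimate is established the remaining calculations are routine.
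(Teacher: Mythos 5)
Your proposal is correct and follows essentially the same route as the paper's proof: the key insight in both is that a leaf of the star keeps $O(1)$ degree after $k$-smoothing with constant probability, so the walk follows a freshly added edge $\{w,v\}$ with constant (rather than $1/n$) probability, giving a per-step success probability of $\Omega(k/n^2)$. Your moment-plus-Markov argument for the joint event ``$\{w,v\}$ is added \emph{and} $\deg_{G'_t}(w)=O(1)$'' is in fact more careful than the paper, which asserts the two events separately without explicitly bounding their joint probability.
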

\begin{proof}
  Suppose that the random walk is at some node $w$ at time $t$.  If
  $w$ is the center of the star at time $t$, then with constant
  probability $\{w,v\}$ is not removed by smoothing and with
  probability $\Omega(1/n)$ the random walk follows the edge
  $\{w,v\}$.  If $w$ is a leaf, then by Lemma~\ref{lem:connlb}, the
  probability that the edge $\{w,v\}$ exists in the smoothed graph is
  at least $\Omega(k/n^2)$.  On the other hand, it is straightforward
  to see that with constant probability no other edge incident on $w$
  is added (this is direct from Lemma~\ref{lem:connub} if $k = o(n)$,
  but for the dynamic star continues to hold up to $k = n$).  If this
  happens, then the degree of $w$ is $O(1)$ and thus the probability
  that the walk follows the randomly added edge $\{w,v\}$ is
  $\Omega(k / n^2)$.  Thus in every time step the probability that the
  walk moves to $v$ is $\Omega(k / n^2)$, and thus the hitting time is
  at most $O(n^2 / k)$.
\end{proof}

\subsection{Lower Bounds}

Since the dynamic star was the worst example for random walks in
dynamic graphs without smoothing, Theorem~\ref{thm:randomstar}
naturally leads to the question of whether the bound of $O(n^2 / k)$
holds for all dynamic graphs in $\mathcal G_{conn}$, or whether the
weaker bound of $O(n^3 / k)$ from Theorem~\ref{thm:randomub} is tight.
We show that under smoothing, the dynamic star is in fact \emph{not}
the worst case: a lower bound of $\Omega(n^{5/2} / \sqrt{k})$ holds
for the \emph{lollipop graph}. The lollipop is a famous example of
graph in which the hitting time is large: there are nodes $u$ and $v$
such that $H(u, v) = \Theta(n^3)$ (see, e.g., \cite{lovasz:1996}).
Here we will use it to prove a lower bound on the hitting time of
dynamic graphs under smoothing:

\begin{theorem} \label{thm:randomlb} There is a dynamic graph
  $\mathcal H \in \mathcal G_{conn}$ and nodes $u,v$ such that
  $H_k(u,v) \geq \Omega(n^{5/2} / (\sqrt{k} \ln n))$ for all
  $k \leq n / 16$ (where smoothing is with respect to
  $\mathcal G_{conn}$).
\end{theorem}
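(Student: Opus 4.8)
The plan is to use a dynamic graph consisting of a single static graph — the lollipop graph — repeated in every round, so that smoothing acts on the same underlying graph each round but the random edges are re-sampled independently. Recall the lollipop graph on $n$ vertices: a clique of size roughly $n/2$ attached to a path of length roughly $n/2$, with $u$ the end of the path (or in the clique) and $v$ the far end. The classical fact is that $H(u,v) = \Theta(n^3)$ because the walk, when in the clique, is overwhelmingly likely to re-enter the clique rather than step onto the path, so it takes $\Theta(n^2)$ attempts to leave the clique, and then $\Theta(n)$ path-traversals (each of length $\Theta(n)$, with a constant chance of falling back into the clique) to reach $v$. I would set $v$ to be the tip of the path and $u$ the clique (or the other path-tip), so the walk must cross the whole path.

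First I would argue that with $k$-smoothing, the $\Theta(n/2)$ path vertices each have degree $O(1)$ in the unsmoothed graph, and after smoothing, with constant probability a given path vertex acquires $O(k/n \cdot n) = O(k)$... — more carefully, by Lemma~\ref{lem:connub} the probability that a specific ``shortcut'' edge from the clique directly to a point deep in the path is added is $O(k/n^2)$, so over a walk of length $L$ the expected number of rounds in which any useful shortcut is available near the current position is small unless $L$ is large. The key quantitative step: a shortcut edge helps only if (i) it is present in the current round and (ii) the walk actually traverses it, which costs another $1/\deg$ factor. Since smoothing adds at most $k$ edges per round, the degree of a path vertex is $O(k)$ with constant probability (via Lemma~\ref{lem:connub} with $S$ the set of all potential edges at that vertex, or a direct counting argument as in the dynamic-star proof), so traversing an added shortcut has probability $O(k/n^2) \cdot O(1/k) = O(1/n^2)$ per round — wait, that is too weak. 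Instead I would track the walk's position $x_t$ on the path (distance from $v$) and show its expected drift toward $v$ per step is $O(\sqrt{k}/n^{3/2} \cdot \text{something})$, or more cleanly, show that the probability of the walk being within distance $d$ of $v$ in any given round is small, and that conditioned on not taking a shortcut the walk behaves like the unsmoothed lollipop walk which needs $\Theta(n^3)$ steps.

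The cleanest route, and the one I expect the authors take: partition the path into the ``far half'' (the $n/4$ vertices closest to $v$). Argue (a) without smoothing the walk needs $\Omega(n^3)$ steps just to reach the near end of the path from the clique ($\Omega(n^2)$ escape attempts), and (b) a smoothed edge can let the walk jump into the far half, but by Lemma~\ref{lem:connub} the probability that in a given round there exists an added edge from the walk's current location into the far half is $O(k \cdot (n/4) / n^2) = O(k/n)$, and even then the walk follows it with probability $O(1/n)$ (degree is $\Omega(n)$ in the clique, $O(1)$ — hmm, $O(\text{poly})$ — on the path, needs the degree-bound lemma), giving a per-round probability $O(k/n^2)$ of a big jump; summing the geometric series, the walk survives $\Omega(n^2/k)$ rounds before the first big jump, and combining with the $\Omega(n)$-per-path-crossing cost and a more careful accounting of the $\sqrt{k}$ factor (the jump lands uniformly, so half the time it lands in the ``far'' quarter of the far half, requiring $\Omega(n)$ more steps, and one needs $\Omega(\sqrt{k})$ independent near-misses)...

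\medskip

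Concretely, here is the structure I would write:

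\textbf{Step 1 (Setup).} Let $\mathcal H$ be the lollipop graph $L_n$ with self-loops, repeated every round; take $u$ in the clique and $v$ the tip of the path. Smoothing is with respect to $\mathcal G_{conn}$; since $L_n$ is connected, Lemmas~\ref{lem:connlb}--\ref{lem:connub} apply.

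\textbf{Step 2 (Degree control).} Show that for any fixed vertex $w$ on the path and any round, with probability $1 - O(k/n^2 \cdot \text{?})$ — more simply, by a union bound over the $O(1)$ original path-edges plus a Lemma~\ref{lem:connub}-style bound, the smoothed degree of $w$ is at most $2k$, hence $\Omega(1/k)$ lower-bounds the chance of following any particular incident edge. Actually we want an \emph{upper} bound on the walk's progress, so we want: the probability the walk follows an added shortcut out of its current vertex in a given round is at most $O(k/n^2) \cdot 1$ — trivially at most the probability such an edge is added, $O(k \cdot |\text{target set}|/n^2)$.

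\textbf{Step 3 (Two ways to reach $v$).} The walk reaches $v$ either (a) by crawling along the path from the clique — requires escaping the clique, which takes $\Omega(n^2)$ rounds in expectation even with the added edges, because each round the chance of moving onto a non-clique vertex is $O(\deg_{\text{path-side}}/n) = O(1/n)$ plus $O(k/n^2)$ from smoothing, total $O(1/n)$ for $k \le n/16$; times $\Omega(n)$ to then traverse the path gives $\Omega(n^3)$; this already exceeds $n^{5/2}/\sqrt k$ — so this case contributes nothing to the walk's speedup and we may assume (b) — or (b) by a smoothed edge landing it at distance $\le d$ from $v$ at some point, then crawling distance $d$.

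\textbf{Step 4 (The $\sqrt k$ and $\ln n$ factors — the crux).} Set $d = \Theta(n/2)$ — no: set the target threshold so that the walk must make $\Theta(\sqrt k / ?)$... The bound $\Omega(n^{5/2}/\sqrt k)$ strongly suggests: a single added shortcut is followed with probability $\Theta(k/n^2) \cdot \Theta(1/?) $; over $T$ rounds the expected number of ``shortcut-traversal'' events is $\Theta(Tk/n^2)$; each such event moves the walk to a uniformly random path position, which is within $n/\sqrt k$ of $v$ with probability only $\Theta(1/(n\sqrt k) \cdot n) = \Theta(1/\sqrt k)$; so the expected number of rounds to land within $n/\sqrt k$ of $v$ via a shortcut is $\Theta(n^2\sqrt k / k) = \Theta(n^2/\sqrt k)$; and from distance $n/\sqrt k$ the plain walk still needs $\Omega(n/\sqrt k \cdot ?)$ — hmm, $\Omega((n/\sqrt k)^2)=\Omega(n^2/k)$ steps on a path segment, or $\Omega(n^{3/2})$? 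Getting the exponents to multiply to $n^{5/2}/\sqrt k$ requires: (rounds to get a shortcut landing close) $\times$ ... Actually I suspect the right bookkeeping is: the walk must experience $\Omega(\sqrt k)$ ``independent attempts'', each costing $\Omega(n^{3/2}/?)$, OR one views it as: first reach distance $\le n^{3/2}/\sqrt k$ from $v$ — hmm.

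Rather than nail every exponent here, the \textbf{main obstacle} (and the heart of the proof) is exactly this Step 4: a careful potential/drift argument showing that shortcut edges, because they land \emph{uniformly} on the path rather than close to $v$, and because they are traversed only with probability $O(k/n^2)$ per round, cannot reduce the hitting time below $\Omega(n^{5/2}/(\sqrt k \ln n))$ — the $\ln n$ presumably coming from a union bound over $\Theta(\log n)$ scales of ``distance from $v$'', or from a high-probability degree bound. I would formalize Step 4 by defining $X_t = $ (distance of the walk from $v$ along the path, or $n$ if in the clique), showing $\mathbb{E}[X_{t+1} - X_t \mid X_t = x, \text{``far from }v\text{''}] \ge -O(\sqrt k / n^{3/2})$ — i.e. the \emph{expected} rate of approach to $v$ is at most $\sqrt k / n^{3/2}$ per round, by combining the $O(1/x)$ diffusive drift of the ordinary path-walk with the $O(k/n^2)$-probability shortcuts weighted by their expected displacement $O(n)$ — and then a standard optional-stopping / Wald argument gives hitting time $\Omega(n / (\sqrt k/n^{3/2})) = \Omega(n^{5/2}/\sqrt k)$, with the $\ln n$ lost to making the drift bound hold with high probability over all rounds. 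The remaining steps (Steps 1--3) are routine given Lemmas~\ref{lem:connlb}--\ref{lem:connub} and the classical lollipop analysis.
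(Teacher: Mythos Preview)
Your setup is right --- the paper uses exactly the static lollipop with $u$ in the clique $V_1$ and $v$ the far end of the path $V_2$ --- and Steps~1--3 are essentially correct (modulo a slip: the per-round escape probability from $V_1$ is $O((1+k)/n^2)$, not $O(1/n)$, giving $\Omega(n^2/k)$ expected clique time before each path excursion; this $n^2/k$, not $n^2$, is what the paper actually uses). The genuine gap is Step~4. Your drift/optional-stopping route does not work as written: smoothed edges allow jumps of magnitude $\Theta(n)$ in a single step, so the increments of $X_t$ are unbounded and no Azuma-type bound applies; moreover the drift you would compute near $v$ is \emph{away} from $v$ (a smoothed jump lands uniformly, hence typically far from $v$), so a naive drift bound yields nothing toward the stated exponent. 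Your alternative ``land within $d$ of $v$, then crawl'' heuristic is much closer to what the paper does, but you chose the wrong threshold: you took $d = n/\sqrt{k}$, whereas the correct choice is $d = \Theta(\sqrt{n/k}\,\ln n)$, smaller by roughly a factor $\sqrt{n}$.

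The missing idea is a \emph{phase} decomposition. Define a phase as a maximal interval during which the walk is on the path and traverses only lollipop edges. Each round on the path the walk follows some smoothed edge with probability $\Theta(k/n)$ (degree $O(1)$ there), so a phase lasts at most $\ell = O((n/k)\ln n)$ rounds with high probability; in $\ell$ steps a simple random walk on $\mathbb{Z}$ displaces by at most $O(\sqrt{\ell\ln n}) = O(\sqrt{n/k}\,\ln n)$ with high probability (Hoeffding) --- this is where both the threshold $d$ and the $\ln n$ come from, not from a union bound over scales. The paper then lets $F$ be the last $c\sqrt{n/k}\,\ln n$ path vertices and splits into two cases: a phase starting outside $F$ hits $v$ with probability $O(n^{-3})$ (so if $v$ is first reached that way the hitting time is $\Omega(n^3)$), while a new phase starts inside $F$ with probability only $O(|F|/n) = O(\ln n/\sqrt{nk})$ (the starting point of a phase is essentially uniform on $V_2$), and between consecutive phases the walk spends $\Omega(n^2/k)$ expected time in the clique, giving $\Omega\bigl(\frac{n^2}{k}\cdot\frac{\sqrt{nk}}{\ln n}\bigr) = \Omega\bigl(\frac{n^{5/2}}{\sqrt{k}\,\ln n}\bigr)$. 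The phase decomposition is precisely what separates the bounded-increment path walk from the unbounded smoothed jumps --- the obstacle your drift argument could not get around.
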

\begin{proof}
  In the lollipop graph $L_n = (V, E)$, the vertex set is partitioned
  into two pieces $V_1$ and $V_2$ with $|V_1| = |V_2| = n/2$.  The
  nodes in $V_1$ form a clique (i.e.~there is an edge between every
  two nodes in $V_1$), while the nodes in $V_2$ form a path (i.e.,
  there is a bijection $\pi : [n/2] \rightarrow V_2$ such that there
  is an edge between $\pi(i)$ and $\pi(i+1)$ for all
  $i \in [(n/2)-1]$).  There is also a single special node
  $v^* \in V_1$ which has an edge to the beginning of the $V_2$ path,
  i.e., there is also an edge $\{v^*, \pi(1)\}$.

  The dynamic graph $\mathcal H$ we will use is the dynamic lollipop:
  $G_i = L_n$ for all $i \geq 1$.  Let $u$ be an arbitrary node in
  $V_1$, and let $v = \pi(n/2)$ be the last node on the path.  We
  claim that $H_k(u,v) \geq \Omega(n^{5/2} / (\sqrt{k} \ln n))$.

  We will first define the notion of a \emph{phase}.  In an arbitrary
  walk on $\mathcal H$ under $k$-smoothing, a phase is a maximal time
  interval in which every node hit by the walk is in $V_2$ and all
  edges traversed are in $L_n$ (i.e., none of the randomly added
  smoothed edges are traversed).  The \emph{starting point} of a phase
  is the first vertex contained in the phase.

  Let
  $F = \{ w \in V_2 : \pi^{-1}(w) \geq (n/2 - c \sqrt{n / k} \ln n)\}$
  for some constant $c$ that we will determine later.  In other words,
  $F$ is the final interval of the path of length
  $t = c \sqrt{n/k} \ln n$.  We divide into two cases, depending on
  whether $v = \pi(n/2)$ is first hit in a phase that starts in $F$.
  We prove that in either case, the (conditional) hitting time is at
  least $\Omega(n^{5/2}/ (\sqrt{k} \log n))$.  Clearly this implies
  the theorem.

  \paragraph{Case 1:} Suppose that $v$ is first hit in a phase with
  starting point outside of $F$.  Consider such a phase (one with
  starting point outside of $F$).  Then by Lemma~\ref{lem:connlb} and
  the fact that the degree of every node in $V_2$ is at most $2$, the
  probability at each time step that the phase ends due to following a
  smoothed edge is at least
  $\frac{c_1 k (n-3)}{3 n^2} \geq \frac{ak}{n}$ for some constant
  $a > 0$.  Thus the probability that a phase lasts more than
  $\ell = \frac{bn}{k} \ln n$ steps is at most
  $(1-\frac{ak}{n})^{\ell} \leq e ^{-\ell ak / n} = e^{-ab \ln n} =
  n^{-ab}$.

  Now suppose that the phase last less than $\frac{bn}{k} \ln n$
  steps.  A standard Hoeffding bound implies that for all
  $\ell' \leq \ell = (bn/k) \ln n$, the probability that a walk of
  length $\ell$ is at a node more than $|F| = c\sqrt{n/k} \ln n$ away
  from the starting point of the phase is at most
  $e^{-2|F|^2 / \ell'} = e^{\frac{-2c^2 (n/k) \ln^2 n}{\ell'}}$.  Now
  a simple union bound over all $\ell' \leq \ell$ implies that the
  probability that the random walk hits $v$ during the phase
  (conditioned on the phase lasting at most $\ell$ steps) is at most
  $\ell \cdot e^{\frac{-2c^2 (n/k) \ln^2 n}{\ell}} = \frac{bn}{k} \ln
  n \cdot e^{\frac{-2c^2 \ln n}{b}} \leq b \ln n \cdot n^{-(2c^2-1) /
    b}$.

  To put these bounds together, let $A$ be the event that the random
  walk hits $v$ during the phase, and let $B$ be the event that the
  phase lasts more than $\ell$ steps.  Then
  \begin{align*}
    \Pr[A] &= \Pr[A|B] \Pr[B] + \Pr[A | \bar B] \Pr[\bar B] \leq \Pr[B] + \Pr[A | \bar B] \leq n^{-ab} + b \ln n \cdot n^{-(2c^2-1) / b}.
  \end{align*}

  For any constant $a$, if we set $b = 4/a$ and
  $c = \sqrt{2b + (1/2)}$ then we get that
  $\Pr[A] \leq O(n^{-4} \ln n) \leq O(n^{-3})$.  Hence the expected
  number of phases starting outside $F$ that occur before one of them
  hits $F$ is $\Omega(n^3)$, and thus the hitting time from $u$ to $v$
  (under $k$-smoothing) conditioned on $v$ first being hit in a phase
  beginning outside of $F$ is $\Omega(n^3)$.

  \paragraph{Case 2:} Suppose that $v$ is first hit in a phase with
  starting point in $F$.  We will show that the hitting time is large
  by showing that the expected time outside of such phases is large.
  We define two random variables.  Let $A_t$ be the number of steps
  between the end of phase $t-1$ and the beginning of phase $t$, and
  let $B_t$ be an indicator random variable for the event that the
  first $t-1$ phases all start outside of $F$ (where we set $B_1 = 1$
  by definition).  Then clearly the hitting time from $u$ to $v$,
  conditioned on $v$ being first hit in a phase with starting point in
  $F$, is $\E\left[\sum_{t=1}^{\infty}A_t B_t\right]$.  Since $A_t$
  and $B_t$ are independent, this is equal to
  $\sum_{t=1}^{\infty} \E[A_t] \E[B_t]$.

  A phase begins in one of two ways: either following a randomly added
  smoothed edge into $V_2$ (from either $V_1$ or $V_2$), or following
  the single edge in the lollipop from $V_1$ to $\pi(1)$.  If it
  begins by following a smoothed edge, then the starting point is
  uniformly distributed in $V_2$.  Since $\pi(1) \not\in F$, this
  clearly implies that
  $\E[B_t] \geq \left(1 - \frac{|F|}{|V_2|}\right)^{t-1} = \left(1 -
    \frac{2c \ln n}{\sqrt{nk}}\right)^{t-1}$.

  To analyze $\E[A_t]$, again note that phase $t-1$ ends by either
  following a smoothed edge or walking on the lollipop from $V_2$ to
  $V_1$.  Since the other endpoint of a smoothed edge is distributed
  uniformly among all nodes, the probability that phase $t-1$ ended by
  walking to $V_1$ is at least $1/2$.  So $\E[A_t]$ is at least $1/2$
  times the expectation of $A_t$ conditioned on phase $t-1$ ending in
  $V_1$.  This in turn is (essentially by definition) just $1/2$ times
  the expected time in $V_1$ before walking to $V_2$.  So consider a
  random walk that is at some node in $V_1$.  Clearly the hitting time
  to $\pi(1)$ without using smoothed edges is $\Omega(n^2)$ (we have a
  $1/n^2$ chance of walking to $v^*$ and then to $\pi(1)$).  The other
  way of starting a phase would be to follow a smoothed edge to $V_2$.
  By Lemma~\ref{lem:connub} the probability at each time step that the
  random walk is incident on a smoothed edge with other endpoint in
  $V_2$ is at most $O(k/n)$.  Since the degree of any node in $V_1$
  under $k$-smoothing is with high probability $\Omega(n)$, the
  probability that we follow a smoothed edge to $V_2$ if one exists is
  only $O(1/n)$, and hence the total probability of following a
  smoothed edge from $V_1$ to $V_2$ is at most $O(k/n^2)$.  Thus
  $\E[A_t] \geq \Omega(n^2 / k)$.

  So we get an overall bound on the hitting time (conditioned on $v$
  being first hit by a phase starting in $F$) of
  \begin{align*}
    \sum_{t=1}^{\infty} \E[A_t] \E[B_t] &\geq \Omega\left(\frac{n^2}{k}\right) \cdot \sum_{t=1}^{\infty} \left(1 - \frac{2c \ln n}{\sqrt{nk}}\right)^{t-1} \geq \Omega\left(\frac{n^{5/2}}{\sqrt{k} \ln n}\right).
  \end{align*} 

  Thus in both cases the hitting time is at least
  $\Omega\left(\frac{n^{5/2}}{\sqrt{k} \ln n}\right)$, completing the
  proof of the theorem.
\end{proof}

If we do not insist on the dynamic graph being connected at all times,
then in fact Theorem~\ref{thm:randomub} is tight via a very simple
example: a clique with a single disconnected node.

\begin{theorem} \label{thm:randomlball}
There is a dynamic graph $\mathcal H$ and vertices $u,v$ such that $H_k(u,v) \geq \Omega(n^3 / k)$ for all $k \leq n$ where smoothing is with respect to $\mathcal G_{all}$.
\end{theorem}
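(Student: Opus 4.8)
The plan is to exhibit the simplest possible obstruction: a dynamic graph that never changes, in which every $G_i$ is a clique on a fixed set $W$ of $n-1$ vertices (with a self-loop at every vertex, as required for random walks) together with one extra vertex $v$ that is isolated apart from its self-loop. Take $u$ to be any vertex of $W$ and let the target be $v$. Since we smooth with respect to $\mathcal G_{all}$ there is no connectivity constraint, so $editdist(G,k)\cap allowed(\mathcal G_{all}) = editdist(G,k)$; thus $k$-smoothing here is just ``toggle up to $k$ uniformly random edges,'' and we may reuse the edge-counting estimates underlying \lemref{connlb}, \lemref{connmissing}, and \lemref{connub} without their conditioning on the connectedness event.

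The heart of the argument is a single-step claim: whenever the walk sits at a vertex $w\in W$, the probability that its next step lands on $v$ is $O(k/n^3)$. Since $\{w,v\}\notin E$, the walk can move from $w$ to $v$ only if smoothing \emph{adds} the edge $\{w,v\}$, which (by the same union-bound-over-toggled-edges estimate used in \lemref{connmissing}/\lemref{connub}) happens with probability $O(k/n^2)$. Conditioned on the smoothed graph $G'$, the walk then follows $\{w,v\}$ with probability $1/\deg_{G'}(w)$. The only danger is that $\deg_{G'}(w)$ is small, but $w$ has $n-2$ clique-neighbors in $G$, so forcing $\deg_{G'}(w) < n/4$ requires deleting $\Omega(n)$ of the $n-1$ edges incident to $w$, and since only $k\le n$ edges are toggled in total a binomial tail bound shows this occurs with probability $n^{-\Omega(n)}$. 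Splitting on this rare event, the per-step probability of jumping to $v$ is at most $O(k/n^2)\cdot(4/n) + n^{-\Omega(n)} = O(k/n^3)$.

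Finally, since the only vertices are $W$ and $v$, the walk leaves $W$ exactly when it first reaches $v$; in particular, up to that moment it is always at some clique vertex, so by the claim each step hits $v$ with probability $O(k/n^3)$. Hence the number of steps before the walk reaches $v$ stochastically dominates a geometric random variable with success probability $O(k/n^3)$, giving $H_k(u,v) = \Omega(n^3/k)$, which proves the theorem; I would also note in passing that this shows \thmref{randomub} is tight in general.

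I expect the only mildly technical point to be the estimate $\Pr[\deg_{G'}(w) < n/4] = n^{-\Omega(n)}$ when $k$ is as large as $n$ (needed so that, even conditioned on the jump-edge $\{w,v\}$ existing, the large degree of the current clique vertex keeps the walk from taking it too often); everything else is an immediate consequence of the toggle-counting already carried out for the connected model, and it is precisely the lack of a connectivity constraint that lets the clique node retain degree $\Theta(n)$ and makes the clean $\Omega(n^3/k)$ bound go through.
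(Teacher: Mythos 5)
Your proposal is correct and is essentially the paper's own proof: the same construction (a static clique on $n-1$ vertices plus an isolated target $v$), the same per-step bound of $O(k/n^3)$ on jumping to $v$ (probability $O(k/n^2)$ that smoothing supplies the edge, times probability $\Theta(1/n)$ of following it), and the same geometric-waiting-time conclusion. If anything you are slightly more careful than the paper, which simply asserts the walk follows the smoothed edge with probability $1/(n-1)$; your tail bound showing $\deg_{G'}(w)=\Omega(n)$ except with probability $n^{-\Omega(n)}$ justifies that step, which the paper leaves implicit.
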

\begin{proof}
  Let each $G_i$ be the same graph: a clique on $n-1$ vertices and a
  single node $v$ of degree $0$.  Let $u$ be an arbitrary node in the
  clique, and consider a random walk starting from $u$.  At each time
  $t$, the probability that an edge exists from the current location
  of the walk to $v$ is $O(k / n^2)$.  If such an edge does exist,
  then the random walk follows it with probability $1 / (n-1)$.  Hence
  the total probability of moving to $v$ at any time is only
  $O(k / n^3)$, and thus the hitting time is $\Omega(n^3 / k)$.
\end{proof}

\section{Aggregation}
\label{sec:aggregation}
  
Here we consider the aggregation problem in the pairing dynamic
network type.  Notice, in our study of flooding and random walks we
were analyzing the behavior of a specific, well-known distributed
process.  In this section, by contrast, we consider the behavior of
arbitrary algorithms.  In particular, we will show the pessimistic
lower bound for the aggregation problem for $0$-smoothed pairing
graphs from~\cite{cornejo:2012}, holds (within constant factors), even
for relatively large amounts of smoothing.  This problem, therefore,
provides an example of where smoothing does not help much.
 
\paragraph{The Aggregation Problem.}
The aggregation problem, first defined in~\cite{cornejo:2012}, assumes
each node $u\in V$ begins with a unique token $\sigma[u]$.  The
execution proceeds for a fixed length determined by the length of the
dynamic graph.\footnote{This is another technical difference between
  the study of aggregation and the other problems considered in this
  paper.  For flooding and random walks, the dynamic graphs were
  considered to be of indefinite size. The goal was to analyze the
  process in question until it met some termination criteria.  For
  aggregation, however, the length of the dynamic graph matters as
  this is a problem that requires an algorithm to aggregate as much as
  it can in a fixed duration that can vary depending on the
  application scenario.  An alternative version of this problem can
  ask how long an algorithm takes to aggregate to a single node in an
  infinite length dynamic graph. This version of the problem, however,
  is less interesting, as the hardest case is the graph with no edges,
  which when combined with smoothing reduces to a standard random
  graph style analysis.}  At the end of the execution, each node $u$
{\em uploads} a set (potentially empty) $\gamma[u]$ containing
tokens. An {\em aggregation algorithm} must avoid both losses and
duplications (as would be required if these tokens were actually
aggregated in an accurate manner). Formally:
 
\begin{definition}
  An algorithm ${\cal A}$ is an {\em aggregation algorithm} if and
  only if at the end of
  every execution of ${\cal A}$ the following two properties hold:\\
(1) {\em No Loss:} $\bigcup_{u\in V} \gamma[u] = \bigcup_{u\in V} \{\sigma[u]\}$.
(2) {\em No Duplication:} $\forall u,v\in V, u\neq v: \gamma[u] \cap \gamma[v] = \emptyset$.
\end{definition}
 
To evaluate the performance of an aggregation algorithm we introduce
the notion of {\em aggregation factor.}  At at the end of an
execution, the aggregation factor of an algorithm is the number of
nodes that upload at least one token (i.e.,
$|\{ u\in V: \gamma[u] \neq \emptyset\} |$).  Because some networks
(e.g., a static cliques) are more suitable for small aggregation
factors than others (e.g., no edges in any round) we evaluate the
competitive ratio of an algorithm's aggregation factor as compared to
the offline optimal performance for the given network.

The worst possible performance, therefore, is $n$, which implies that
the algorithm uploaded from $n$ times as many nodes as the offline
optimal (note that $n$ is the maximum possible value for an
aggregation factor). This is only possible when the algorithm achieves
no aggregation and yet an offline algorithm could have aggregated all
tokens to a single node.  The best possible performance is a
competitive ratio of $1$, which occurs when the algorithm matches the
offline optimal performance.

\paragraph{Results Summary.}
In~\cite{cornejo:2012}, the authors prove that no aggregation
algorithm can guarantee better than a $\Omega(n)$ competitive ratio
with a constant probability or better.  In more detail:

\begin{theorem}[Adapted from~\cite{cornejo:2012}]
  For every aggregation algorithm ${\cal A}$, there exists a pairing
  graph ${\cal H}$ such that with probability at least $1/2$:
  ${\cal A}$'s aggregation factor is $\Omega(n)$ times worse than the
  offline optimal aggregation factor in ${\cal H}$.
\label{thm:agg}
\end{theorem}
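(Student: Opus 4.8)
The plan is to exhibit a randomized ``binary-tree'' pairing schedule on which the offline optimum aggregates every token to a single node, while any online algorithm is forced, with probability at least $1/2$, to orphan $\Omega(n)$ nodes already in the first round. Assume for simplicity that $n$ is a power of two (the general case follows by a standard padding argument that leaves the surplus nodes isolated in every round). In round~$1$, partition $V$ into $n/2$ fixed pairs; for each pair, independently flip a fair coin to designate one endpoint the \emph{survivor}. In round~$2$, pair up the $n/2$ survivors -- the oblivious adversary knows the coin outcomes when it commits to $\mathcal H$, so it can do this -- again designating survivors, and continue for $\log n$ rounds until a single node survives. Note that after round~$i$ the non-survivors of round~$i$ are never matched again. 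This whole object, coin outcomes included, is the (random) dynamic graph $\mathcal H \in {\cal G}_{pair}$.

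The offline optimum is immediate: knowing the schedule, in each round it moves all tokens from every non-survivor onto its partner survivor, so after $\log n$ rounds the unique final survivor holds all $n$ tokens and its aggregation factor is $1$.

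For the online side, first fix a \emph{deterministic} algorithm ${\cal A}$. Since the round-$1$ graph $G_1$ carries no information about the coins, ${\cal A}$'s round-$1$ action on each pair $\{a,b\}$ is a fixed function of $G_1$, and (as the two tokens in the pair are conserved) it results in one of: all tokens on $a$, all tokens on $b$, or one token on each. In the first two cases ${\cal A}$ ``guesses'' a survivor and is wrong with probability exactly $1/2$; in the third case the non-survivor of the pair keeps a token no matter which endpoint it is. Hence, with probability at least $1/2$ over the coin of pair~$i$, the non-survivor of pair~$i$ ends round~$1$ with a nonempty token set; being never matched again, it must upload that token (by ``No Loss,'' and since a node sheds tokens only across a matching edge). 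These events depend only on the independent per-pair coins, so the number $M$ of such orphaned non-survivors stochastically dominates a $\mathrm{Binomial}(n/2,1/2)$ random variable; a Chernoff/Hoeffding bound gives $M \ge n/8$ with probability $1 - e^{-\Omega(n)} \ge 1/2$ for large $n$. Since those $M$ nodes contribute to the aggregation factor regardless of what ${\cal A}$ does in later rounds, ${\cal A}$'s aggregation factor is at least $n/8$ with probability at least $1/2$, versus an offline optimum of $1$ -- a competitive ratio of $\Omega(n)$.

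Two loose ends remain. For a \emph{randomized} ${\cal A}$, the bound above holds for every fixing of ${\cal A}$'s random bits, so in expectation over the coins of $\mathcal H$ the probability (over ${\cal A}$'s bits) of aggregation factor $\ge n/8$ is at least $1 - e^{-\Omega(n)}$; consequently some concrete coin outcome -- i.e.\ a single fixed pairing graph $\mathcal H$ -- achieves probability at least $1/2$, matching the statement. The one point that needs care is the claim that an orphaned non-survivor genuinely contributes to the aggregation factor irrespective of ${\cal A}$'s later behavior, together with checking that the offline optimum really reaches factor $1$ on the \emph{same} schedule; I expect this bookkeeping to be the only delicate part, the probabilistic content being a single concentration bound.
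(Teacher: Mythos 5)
Your proposal is correct and is essentially the paper's own argument: the paper states \thmref{agg} as imported from~\cite{cornejo:2012} without reproving it, but the hard instance it constructs for the smoothed generalization (the $\alpha$-stable pairing graph process underlying \lemref{aggopt} and \lemref{aggalg}) is exactly your randomized binary-tree schedule with one fair coin per first-round pair, the same ownership/indistinguishability argument showing the isolated non-survivor must upload a token with probability $1/2$, and the same final averaging over the coins to extract a single hard graph. The only cosmetic differences are that you use a Chernoff bound where the paper applies Markov's inequality, and you randomize the survivor at every level of the tree whereas only the first level's coins are actually used.
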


Our goal in the remainder of this section is to prove that this strong
lower bound persists even after a significant amount of smoothing
(i.e., $k=O(n/\log^2{n})$).  We formalize this result below (note that
the cited probability is with respect to the random bits of both the
algorithm and the smoothing process):

\begin{theorem}
  For every aggregation algorithm ${\cal A}$ and smoothing factor
  $k \leq n/(32\cdot \log^2{n})$, there exists a pairing graph
  ${\cal H}$ such that with probability at least $1/2$: ${\cal A}$'s
  aggregation factor is $\Omega(n)$ times worse than the offline
  optimal aggregation factor in a $k$-smoothed version of ${\cal H}$
  (with respect to $\mathcal G_{pair}$).
\label{thm:agg:simplified}
\end{theorem}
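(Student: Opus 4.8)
The plan is to fold the perturbations into a robustified version of the worst‑case instance behind Theorem~\ref{thm:agg} and re‑run its core ``commit‑before‑you‑know'' argument. Recall that the hard instance is (morally) a random binary \emph{aggregation tournament}: the $n$ nodes are the leaves of a complete binary tree, and for each internal node $x$ the two current representatives of $x$'s two subtrees are paired in a round, after which one of them --- chosen uniformly at random by the schedule, \emph{independent of everything the algorithm sees until the next level} --- becomes the representative of $x$. An offline algorithm knows the tree and funnels every token to the root, so the optimal aggregation factor is $O(1)$; an online algorithm, at each such meeting, must decide how to split the two token piles \emph{before} it learns which endpoint survives, so with probability at least $1/2$ it strands a nonempty pile at a node that is never matched again, and counting the bottom‑level strandings alone already yields $\Omega(n)$ nonempty nodes at the end.

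To make this robust to $k$‑smoothing I would use the following instance: take the $\log n$ levels of the tournament, but repeat the matching of each level $\Theta(\log n)$ times in a row, so the whole schedule has $T=\Theta(\log^2 n)$ rounds. The repetition is harmless for the lower bound --- in those rounds the matching contains only the ``intended'' pairs, so an online algorithm can perform no aggregation other than the intended merges, and re‑executing a merge does nothing --- and it is exactly what makes the offline optimum survive smoothing. The key budget computation is: by Lemma~\ref{lem:pairingsmooth} (applied to the large same‑type class of matched, resp.\ unmatched, nodes; the few rounds in which the matched class is small are handled by a direct argument in the style of Lemma~\ref{lem:connmissing}), in each round the set of nodes whose adjacency changes has expected size $O(k)$, so a Chernoff bound over the $T=\Theta(\log^2 n)$ rounds shows that with probability $1-o(1)$ the total number of perturbed $(\text{node},\text{round})$ pairs is $O(kT)=O(k\log^2 n)\le n/16$, using the hypothesis $k\le n/(32\log^2 n)$. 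Call the merges touched by such a perturbation \emph{disturbed}; only $o(n)$ merges are disturbed, and only $o(n)$ nodes are ever an endpoint of a smoothed edge.

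From here the two halves are each short. \emph{Offline stays $O(1)$}: a merge is lost only if it is disturbed in \emph{all} $\Theta(\log n)$ of its copies; since smoothing is independent across rounds and each copy is disturbed with probability $O(k/n)$, this has probability $\bigl(O(k/n)\bigr)^{\Theta(\log n)}<1/n^2$, and a union bound over the $n-1$ merges gives that with probability $1-o(1)$ every merge survives in some round, so the offline optimum on the smoothed graph is still $1$. \emph{Online stays $\Omega(n)$}: restrict attention to level‑$1$ merges whose two endpoints are never perturbed --- at least $n/2-o(n)$ of them --- and observe that for each such merge the surviving endpoint is still uniform conditioned on the online algorithm's view, so it strands a nonempty pile with probability $\ge 1/2$, independently across merges; a Chernoff bound creates at least $n/8$ such nonempty nodes with probability $1-e^{-\Omega(n)}$, and at most the $o(n)$ nodes touched by smoothed edges can later be ``un‑stranded''. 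Taking a union bound of the two $1-o(1)$ events, with probability at least $1/2$ the online aggregation factor is $\Omega(n)$ while the offline optimum is $O(1)$, i.e.\ $\mathcal A$ is $\Omega(n)$ times worse.

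The step I expect to be the main obstacle is the one that dictates the shape of the whole construction: one needs the schedule long enough --- the $\Theta(\log n)$‑fold repetition --- that the offline optimum is immune to the $o(n)$ merges smoothing can destroy, yet short enough that the $O(kT)$ \emph{new} random pairings smoothing introduces cannot be leveraged by the online algorithm to aggregate more than $o(n)$ extra tokens. These requirements pull in opposite directions; they are what force the length $\Theta(\log^2 n)$ and hence the threshold $k=O(n/\log^2 n)$, and making them simultaneously precise --- in particular, arguing that the repeated rounds give the online algorithm no aggregation leverage beyond the smoothed edges themselves, and that perturbations in sparse rounds (few matched nodes) still disturb only $O(k)$ nodes --- is the technical heart of the proof.
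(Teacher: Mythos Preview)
Your proposal is essentially the paper's own approach: a randomized binary tournament whose first level hides a uniform coin from the online algorithm, with each level repeated $\Theta(\log n)$ times so that smoothing (budget $O(k\log^2 n)\le n/32$) can neither destroy every copy of any fixed merge (offline stays $1$) nor help the online algorithm on more than a small constant fraction of the first‑level pairs (online stays $\Omega(n)$). The paper uses Markov where you reach for Chernoff, but the skeleton is identical.

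Two small points are worth noting. First, you write $o(n)$ for the number of disturbed merges and touched nodes; with $k=\Theta(n/\log^2 n)$ and $T=\Theta(\log^2 n)$ this is only a \emph{constant} fraction (your own bound $\le n/16$ is the right statement). This does not break the argument, since $n/2-n/16=\Omega(n)$ first‑level merges remain clean, but the asymptotics should be stated correctly. Second, the sparse‑round worry you flag at the end is real for the construction as you describe it, but the paper simply sidesteps it: after the first level it takes the $n/2$ ``survivors'' and, in every subsequent phase, pairs \emph{all} of them --- the non‑tournament survivors are paired arbitrarily among themselves --- so every round has at least $n/2$ matched nodes and Lemma~\ref{lem:pairingsmooth} applies with $\delta=2$ throughout. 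Adopting that tweak removes the need for a separate argument in the small‑matched regime; and note that those extra (arbitrary) pairings among already‑stranded nodes give the online algorithm no new access to the undisturbed $\bar S$ nodes, so the lower‑bound half is unaffected.
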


\subsection{Lower Bound}
 
Here we prove that for any smoothing factor $k \leq (cn)/\log^2{n}$
(for some positive constant fraction $c$ we fix in the analysis),
$k$-smoothing does not help aggregation by more than a constant factor
as compared to $0$-smoothing.  To do so, we begin by describing a
probabilistic process for generating a hard pairing graph.  We will
later show that the graph produced by this process is likely to be
hard for a given randomized algorithm.  To prove our main theorem, we
will conclude by applying the probabilistic method to show this result
implies the existence of a hard graph for each algorithm.

\paragraph{The $\alpha$-Stable Pairing Graph Process.}
We define a specific process for generating a pairing graph (i.e., a
graph in $allowed({\cal G}_{pair})$).  The process is parameterized by
some constant integer $\alpha \geq 1$.
In the following, assume the network size $n=2\ell$ for some integer
$\ell\geq 1$ that is also a power of $2$.\footnote{We can deal with
  odd $n$ and/or $\ell$ not a power of $2$ by suffering only a
  constant factor cost to our final performance.  For simplicity of
  presentation, we maintain these assumptions for now.}  For the
purposes of this construction, we label the $2\ell$ nodes in the
network as $a_1, b_1, a_2,b_2,...,a_{\ell},b_{\ell}$.
For the first $\alpha$ rounds, our process generates graphs with the
edge set: $\{(a_i, b_i): 1 \leq i \leq \ell\}$.  After these rounds,
the process generates $\ell$ bits, $q_1,q_2,...,q_{\ell}$, with
uniform randomness.  It then defines a set $S$ of {\em selected nodes}
by adding to $S$ the node $a_i$ for every $i$ such that $q_i =0$, and
adding $b_i$ for every $i$ such that $q_i = 1$.  That is, for each of
our $(a_i, b_i)$ pairs, the process randomly flips a coin to select a
single element from the pair to add to $S$.
 
For all graphs that follow, the nodes {\em not in} $S$ will be
isolated in the graph (i.e., not be matched).  We turn our attention
to how the process adds edges between the nodes that are in $S$.  To
do so, it divides the graphs that follow into {\em phases}, each
consisting of $\alpha$ consecutive rounds of the same graph.  In the
first phase, this graph is the one that results when the process pairs
up the nodes in $S$ by adding an edge between each such pair (these
are the only edges).  In the second phase, the process defines a set
$S_2$ that contains exactly one node from each of the pairs from the
first phase. It then pairs up the nodes in $S_2$ with edges as
before. It also pairs up all nodes in $S \setminus S_2$ arbitrarily.
Every graph in the second phase includes only these edges.  In the
third phase, the process defines a set $S_3$ containing exactly one
node from each of the $S_2$ pairs from the previous pairs. It then
once again pairs up the remaining nodes in $S$ arbitrarily.  The
process repeats this procedure until phase $t = \log_2{|S|}$ at which
point only a single node is in $S_t$, and we are done.

The total length of this dynamic graph is $\alpha(\log_2{(|S|)} + 1)$.
It is easy to verify that it satisfies the definition of the pairing
dynamic network type.
 
\paragraph{Performance of the Offline Optimal Aggregation Algorithm.}
We now show that the even with lots of smoothing, a graph generated by
the stable pairing graph process, parameterized with a sufficiently
large $\alpha$, yields a good optimal solution (i.e., an aggregation
factor of $1$).
 
\begin{lemma}
  For any $k \leq n/32$, and any pairing graph ${\cal H}$ that might
  be generated by the $(\log{n})$-stable pairing graph process, with
  high probability in $n$: the offline optimal aggregation algorithm
  achieves an aggregation factor of $1$ in a $k$-smoothed version of
  ${\cal H}$.
\label{lem:aggopt}
\end{lemma}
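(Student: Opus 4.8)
The plan is to exhibit an explicit offline schedule that merges all $n$ tokens onto a single node, and then argue that the only way smoothing can obstruct this schedule is by \emph{deleting} one of the edges the schedule wants to use. Since the $(\log n)$-stable process repeats each relevant graph for $\alpha=\log_2 n$ consecutive rounds, and each round is smoothed with independent randomness, it will suffice to show that no single ``critical'' edge is deleted in \emph{all} $\log_2 n$ rounds of its block; a union bound over the critical edges then finishes things.

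First I would make the layered structure of the $\alpha$-stable process explicit: there is a \emph{level-$0$ block} of $\alpha$ rounds in which the graph is the perfect matching $\set{(a_i,b_i)}$, followed by phases $j=1,\dots,\log_2\ell$, where phase $j$ is $\alpha$ identical rounds of a graph whose ``active'' part pairs up $S_j\subseteq S$ (with $|S_j|=\ell/2^{j-1}$) and whose remaining edges pair up $S\setminus S_j$ arbitrarily. Call \emph{critical} the $\ell$ matching edges of level $0$ together with, for each phase $j$, the $|S_j|/2$ edges used to pair up $S_j$; there are fewer than $\ell+\ell/2+\cdots < 2\ell\le n$ of them, and each sits inside a block of $\alpha$ identical rounds. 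Let $\mathcal{E}$ be the event that every critical edge is present (not deleted by smoothing) in at least one round of its block. I would then describe the offline algorithm: it processes blocks in order, maintaining the invariant that after level $j$ each representative of a level-$j$ ``super-node'' holds all tokens originating in that super-node (super-nodes at level $0$ are the pairs $\set{a_i,b_i}$, and each level-$j$ super-node is two level-$(j-1)$ super-nodes joined by a phase-$j$ critical edge). For each critical edge $\set{u,v}$ the algorithm picks a round of the relevant block in which that edge survives (possible under $\mathcal{E}$) and moves all of $u$'s currently held tokens to $v$; any edges added or rearranged by smoothing are simply never used, so they cannot interfere, and after the last phase one node holds all $n$ tokens, giving aggregation factor $1$.

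The remaining work is to bound $\Pr[\neg\mathcal{E}]$. Fix a critical edge $e=\set{u,v}$ and one round of its block. Since in the intended graph $u$'s only neighbor is $v$, the edge $e$ can be absent in the smoothed graph only if $u$'s adjacency list changes, so \lemref{pairingsmooth} bounds the per-round removal probability. In the level-$0$ block every node is matched, so with $S=V$ and $\delta=1$ the removal probability is at most $2k/n\le 1/16$; in phase $j$ the matched nodes are exactly $S$ with $|S|=\ell=n/2$, and both endpoints of a phase-$j$ critical edge lie in $S$, so with $\delta=2$ the removal probability is at most $4k/n\le 1/8$ (both side conditions $|S|\ge n/\delta$ and $k< n/(2\delta)$ hold since $k\le n/32$). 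Thus in every block the per-round removal probability is at most $1/8$, and because the $\alpha=\log_2 n$ rounds of a block are smoothed independently, $e$ is absent in all of them with probability at most $(1/8)^{\log_2 n}=n^{-3}$. A union bound over the fewer than $n$ critical edges gives $\Pr[\neg\mathcal{E}]< n^{-2}$, so $\mathcal{E}$ holds with high probability, and on $\mathcal{E}$ the offline optimum achieves aggregation factor $1$.

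I expect the main obstacle to be the bookkeeping in the last paragraph: one must apply \lemref{pairingsmooth} with the right node set and the right $\delta$ in each block, verify its side conditions against $k\le n/32$, and justify that the event ``$e$ is removed'' is contained in ``$u$'s adjacency list changes'' so that the $(2\delta k/n)$ bound really upper-bounds the removal probability. A secondary nuisance is the assumption that $\ell$ is a power of two and that $\log n$ is an integer, which the process's footnote already lets us absorb into constant factors.
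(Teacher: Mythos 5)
Your proposal is correct and follows essentially the same route as the paper's proof: apply \lemref{pairingsmooth} with $\delta=2$ to get a per-round edge-removal probability of at most $(4k)/n\le 1/8$, amplify over the $\log n$ independently smoothed rounds of each block to get $n^{-3}$, and union bound over the fewer than $n$ relevant edges and phases. The only nit is your use of $\delta=1$ for the level-$0$ block, which violates the lemma's hypothesis $\delta>1$; just take $\delta=2$ there as well (as the paper does uniformly), which still gives the needed $1/8$ bound.
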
 
\begin{proof}
  In the following, let $\alpha=\log{n}$ be the stability parameter
  provided the graph process.  We will describe an offline algorithm
  that guarantees for every possible dynamic graph produced by the
  $\alpha$-stable pairing graph process, that with high probability
  (over the smoothing choices) the algorithm aggregates all values to
  a single node.
  It follows, of course, that the offline optimal algorithm achieves
  this same optimal factor.

  To begin our argument, consider some dynamic graph generated by the
  $\alpha$-stable pairing process.  We divide its graphs into {\em
    phases} of $\alpha$ consecutive static graphs each.  Recall from
  our definition of the graph process that during each phase, some
  nodes are paired and some nodes are isolated.  It also follows from
  our definition of this process that the graphs within a given phase
  are all the same.

  Fix some phase $i$ and some edge $(u,v)$ in that phase's graph.  Our
  key observation is that this edge is included in at least one of the
  smoothed graphs during this phase (i.e., there is at least one round
  where the smoothing does not remove this edge), with high
  probability in $n$. To prove this observation, we first focus on a
  single round in this phase and show there is at least a constant
  probability that $(u,v)$ is left alone in this round.  To prove this
  result we apply \lemref{pairingsmooth} to $S=M$ (where $M$ is the
  set of matched nodes in this round), $\delta=2$, and node $u$.  (We
  obtain our $\delta$ value by noting that, by definition, the set of
  matched nodes in every round of every graph produced by our
  adversary includes at least half the nodes.)  The probability that
  $(u,v)$ is removed in a our round is less than or equal to the
  probability $p_u$ that $u$ is affected in the round.
  \lemref{pairingsmooth} tells us that $p_u \leq (4k)/n$.  Given our
  assumption that $k \leq n/32$, it follows that
  $p_u \leq 4/32 = 1/8$.  Therefore, the probability that $(u,v)$ is
  removed by smoothing in {\em every} round of phase $i$, is no more
  than:

  \[ p_u^{\alpha} = (1/8)^{\log{n}} = (1/2^3)^{\log{n}} = 1/n^3.\]

  To complete the proof, we apply two union bounds to obtain this
  property for all matched edges in all phases, while avoiding
  dependencies.  First, within a given phase, a union bound provides
  that with probability $1/n^2$, all matched edges (of which there are
  less than $n$) in the original graph are preserved at least once.
  Another union bound for the total number of phases (which is also
  less than $n$), provides that this holds for very phase.  It is then
  simple to verify from the definition of the smoothing process that
  it is possible to aggregate every value to the single node that
  remains in the $S_t$ (i.e., the root of the tree induced by the
  phases greater than $1$).
\end{proof}
 
\paragraph{Performance of an Arbitrary Distributed Aggregation
  Algorithm.}
 We now fix an arbitrary distributed aggregation algorithm and
 demonstrate that it cannot guarantee (with good probability) to
 achieve a non-trivial competitive ratio in all pairing graphs.  In
 particular, we will show it has a constant probability of performing
 poorly in a graph generated by our above process.
 
\begin{lemma}
  Fix an online aggregation algorithm ${\cal A}$ and smoothing factor
  $k \leq n/ (32\cdot \log^2{n})$.  Consider a $k$-smoothed version of
  a graph ${\cal H}$ generated by the $(\log{n})$-stable pairing graph
  process.  With probability greater than $1/2$ (over the smoothing,
  adversary, and algorithm's independent random choices): ${\cal A}$
  has an aggregation factor in $\Omega(n)$ when executed in this
  graph.
 \label{lem:aggalg}
\end{lemma}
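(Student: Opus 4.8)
The plan is to show that the $\alpha$-stable pairing graph process, even after $k$-smoothing with $k \leq n/(32 \log^2 n)$, forces any online algorithm to fail to aggregate a constant fraction of the tokens, so that its aggregation factor is $\Omega(n)$ while (by \lemref{aggopt}) the offline optimum is $1$. The core difficulty is that the algorithm does not know the random bits $q_1,\dots,q_\ell$ that select which node of each pair $(a_i,b_i)$ lands in $S$, nor the subsequent phase selections $S_2, S_3, \dots$; and with smoothing, it additionally sees $O(k)$ random edges per round that it might try to exploit. I would argue that neither source of information helps the algorithm enough.

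First I would set up the information-theoretic heart of the argument, ignoring smoothing. During the first $\alpha$ rounds the graph is the fixed perfect matching $\{(a_i,b_i)\}$, so the algorithm can freely aggregate within each pair; but at the end of those rounds, each token from pair $i$ sits at either $a_i$ or $b_i$, and exactly one of those two nodes is chosen (by the fair coin $q_i$) to be in $S$ --- the other becomes permanently isolated for the rest of the execution. The key claim is that for each pair $i$, with probability $\ge 1/2$ (actually, regardless of the algorithm's behavior, at least a constant fraction of the mass the algorithm has placed at the ``wrong'' endpoint is stranded), a constant fraction of pairs end up with a token stranded on an isolated node, so those tokens must be uploaded from their own isolated node. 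More carefully: for pair $i$, consider where the algorithm has routed $\sigma[a_i]$ and $\sigma[b_i]$ after the first phase; since $q_i$ is uniform and independent, with probability $1/2$ the selected node is the one NOT currently holding (say) $\sigma[a_i]$, so that token is stuck. A linearity-of-expectation argument then gives that in expectation $\Omega(\ell) = \Omega(n)$ tokens are stranded on isolated nodes, hence uploaded from distinct nodes, giving aggregation factor $\Omega(n)$; a second-moment or Markov-type argument (the events across pairs are independent) upgrades this to ``with probability $\ge 1/2$.''

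Next I would handle the two ways smoothing could rescue the algorithm. The first is that a smoothed edge could connect a stranded, isolated node to some matched node, letting its token escape. By \lemref{pairingsmooth} applied to the set of unmatched nodes (which has size $\ge n/2$, so $\delta = 2$), each fixed isolated node has its adjacency changed with probability $\le 4k/n \le 1/(8\log^2 n)$ in any single round; union-bounding over the $O(\alpha \log n) = O(\log^2 n)$ rounds of the whole execution, a fixed isolated node is touched by smoothing at all with probability $\le O(1/\log^2 n) \cdot \log^2 n$ --- this is where the constant $c$ in $k \le cn/\log^2 n$ must be chosen small enough that this bound is a constant strictly below $1$, say $\le 1/4$. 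So in expectation at most a $1/4$ fraction of the would-be-stranded tokens get a smoothed escape route; the remaining $\Omega(n)$ are still stranded. The second worry is that smoothing during the first $\alpha$ rounds could let the algorithm pre-aggregate more cleverly across pairs; but since it cannot predict $q_i$, any cross-pair aggregation it does is equally likely to consolidate onto the selected or the rejected node, so this does not change the expected stranded count. The main obstacle is bookkeeping the conditioning and independence correctly: the smoothing choices, the $q_i$'s, and the algorithm's coins must be kept independent, and the union bounds must be taken in the right order so that the ``$\Omega(n)$ stranded with probability $\ge 1/2$'' conclusion survives both the adversary's and the smoothing's randomness simultaneously. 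I would organize this by first fixing the algorithm's coins, showing the claim holds with probability $\ge 1/2$ over $(q, \text{smoothing})$, and then noting the probabilistic method lets us extract a single hard $\mathcal H$ in the final theorem.
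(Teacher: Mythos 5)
Your proposal follows essentially the same route as the paper's proof: bound, via \lemref{pairingsmooth} with $\delta=2$, the per-round probability that a fixed non-selected node's adjacency changes by $4k/n \leq 1/(8\log^2 n)$, sum over the $O(\log^2 n)$ rounds and the $n/2$ non-selected nodes, and apply Markov to keep $\Omega(n)$ of them untouched by smoothing; then use the uniformly random choice of which member of each pair is selected to argue that each untouched non-selected node must upload a token with probability at least $1/2$, and finish with Markov and a union bound. This matches the paper's argument step for step.

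One sub-step is wrong as stated, although your own machinery rescues it. You dismiss the danger that smoothed edges during the initial $\alpha$ rounds enable cross-pair consolidation by claiming that symmetry implies ``this does not change the expected stranded count.'' That is false: if the algorithm merges the four tokens of pairs $i$ and $j$ onto one node, the expected number of stranded uploaders contributed by those two pairs drops from at least $1$ to $1/2$, so unrestricted cross-pair mixing genuinely could degrade the bound. The paper avoids this by defining the good set $U$ as the non-selected nodes untouched by smoothing in \emph{every} round, including the initial ones; because every graph is a matching, an untouched node's pair is then completely closed during the initial rounds, and the two-node ownership argument of~\cite{cornejo:2012} (consider the extension in which the pair is isolated forever; no-loss and no-duplication force the pair's two tokens to be partitioned between the two nodes) applies verbatim. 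Your disturbance count already ranges over the whole execution, so the repair is simply to fold the initial rounds into the definition of ``untouched'' rather than arguing them away by symmetry. Relatedly, the stranding claim is better phrased in terms of the outputs $\gamma[\cdot]$ the algorithm would produce in the isolated extension (ownership) rather than where a token is ``currently held,'' since an arbitrary algorithm need not maintain any meaningful notion of token location before the end of the execution.
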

\begin{proof}
  Consider the pairing graph generated by the stable pairing graph
  process, before the smoothing is applied.
  Let $S$ be the set of {\em selected} nodes (see the process
  definition) and $\bar S = V\setminus S$ be the non-selected nodes.
  By definition, $|\bar S| = |S| = n/2$.

  Let $Y$ be the number of times that nodes in $\bar S$ are affected
  by smoothing in the $s = O(\log^2{n})$ individual graphs generated
  by the $(\log{n})$-stable pairing graph process.  To calculate this
  expectation, let $X_{i,r}$ be the indicator random variable that is
  $1$ if the $i^{th}$ node in $\bar S$ (by some arbitrary but fixed
  ordering) is affected by smoothing in the $r^{th}$ graph generated
  by the process, and otherwise $0$.  It follows that
  $Y = \sum X_{i,r}$, for all $i\in [|\bar S|]$ and $r\in [s]$, and
  therefore
  $\mathbb{E}(Y) = \sum \mathbb{E}(X_{i,r}) = \sum \Pr(X_{i,r} =1)$.
 
  We can upper bound $\Pr(X_{i,r} =1)$, for any particular $i$ and
  $r$, by applying \lemref{pairingsmooth} to the set $\bar S$ of
  unmatched nodes and $\delta=2$.  It follows that:
  \[Pr(X_{i,r} =1) \leq (2\cdot \delta\cdot k)/n = (4k)/n \leq 4/(32
  \log^2{n}) = 1/(8\log^2{n}). \]%
  \noindent Because we are summing over
  $|\bar S|\cdot s \leq (n/2)\log^2{n}$ values, we obtain the bound
  $\mathbb{E}(Y) \leq n/16$.  We now apply Markov's inequality to
  establish that $Y \geq n/4$ with probability no more than $1/4$.
  Let $U$ denote the set of nodes in $\bar S$ that remain
  undistributed by smoothing.  Notice, we just proved that probability
  at least $3/4$, $|U| \geq n/4$.
 
  We now want to consider what happens to nodes in $U$.  Fix some
  $x\in U$. Let $y\in S$ be the node connected to $x$ throughout phase
  $1$ in the graph generated by the graph process. Because $x\in U$,
  it follows that this edge is undisturbed throughout phase $1$.  It
  follows that $x$ and $y$ have no opportunity to learn new tokens or
  to pass on their existing tokens outside the pair during phase $1$.
  A key property proved in~\cite{cornejo:2012} is that at the end of
  phase $1$ we can assign {\em owners} to $x$ and $y$'s tokens among
  $x$ and $y$.  To do so, consider what would happen if we extend this
  graph such that going forward $x$ and $y$ are isolated.  It must be
  the case that $\gamma[x] \cup \gamma[y] = \{\sigma[x],\sigma[y]\}$
  and $\gamma[x] \cap \gamma[y] = \emptyset$. If these properties did
  not hold in this extension, then no duplication and/or no loss would
  be violated in this extension, and the aggregation problem requires
  these properties to {\em always} hold.  It follows that at least one
  of these two nodes has a non-empty $\gamma$ set at the end of this
  isolation extension.  Because the graph process selected which node
  went to $S$ from among this pair with uniform and independent
  randomness, the probability that the node not chosen for $S$ ends up
  owning at least one token (by our above definition of ownership), is
  at least $1/2$.  Put another way, for each node in $U$, with
  independent probability at least $1/2$, that node will end up
  outputting at least one token at the end of the execution.  We
  expect that at least half the nodes in $U$ will output tokens.
  Another application of Markov's tells us that the probability that
  less than a smaller constant fraction of these nodes end up
  uploading is less than $1/4$.
 
  Pulling together the pieces, we first proved that the probability
  that $|U| < n/4$ is at most $1/4$.  We then proved that the
  probability that less than $|U|/j$ nodes end up uploading, for some
  constant $j$, is also at most $1/4$.  A union bound provides that
  the probability at least one of these bad events occurs is strictly
  less than $1/2$.  In the case that neither bad event occurs (a case
  that holds with probability strictly more than $1/2$), we end up
  with an aggregation factor in $(1/j)(n/4) \in \Omega(n)$---as
  required by the lemma statement.
\end{proof}
 
A final union bound combines the results from Lemmas~\ref{lem:aggopt}
and~\ref{lem:aggalg} to get our final corollary.  Applying the
probabilistic method to the corollary yields the main
theorem---\thmref{agg:simplified}.
   
\begin{corollary}
  Fix an aggregation algorithm ${\cal A}$ and smoothing factor
  $k \leq n/ (32\cdot \log^2{n})$.  There is a method for
  probabilistically constructing a pairing graph ${\cal H}$, such that
  with probability greater than $1/2$ (over the smoothing, adversary,
  and algorithm's independent random choices): ${\cal A}$'s
  aggregation factor in a $k$-smoothed version of ${\cal H}$ is
  $\Omega(n)$ times larger than the offline optimal factor for this
  graph.
  \label{corr:agg}
\end{corollary}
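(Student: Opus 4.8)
The plan is to use the $(\log n)$-stable pairing graph process defined above as the promised probabilistic construction of $\mathcal H$ and to derive the corollary as a two-term union bound over the conclusions of Lemma~\ref{lem:aggopt} and Lemma~\ref{lem:aggalg}; the choice of stability parameter is essentially forced, since both lemmas are stated for exactly the $(\log n)$-stable process. First I would check the hypotheses: the assumed bound $k \leq n/(32\log^2 n)$ implies $k \leq n/32$, which is the hypothesis of Lemma~\ref{lem:aggopt}, and it is precisely the hypothesis of Lemma~\ref{lem:aggalg}. Hence both lemmas apply to a $k$-smoothed version of a graph $\mathcal H$ drawn from this process, with probability taken over the process's coin flips, the smoothing choices, and (for Lemma~\ref{lem:aggalg}) the algorithm's random bits.

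Let $E_{\mathrm{opt}}$ be the event that the offline optimal aggregation algorithm achieves aggregation factor $1$ on the $k$-smoothed $\mathcal H$, and let $E_{\mathrm{alg}}$ be the event that $\mathcal A$'s aggregation factor on the $k$-smoothed $\mathcal H$ is $\Omega(n)$. Lemma~\ref{lem:aggopt} gives $\Pr[\overline{E_{\mathrm{opt}}}] = o(1)$; since that lemma holds for \emph{every} graph the process can output, this bound survives averaging over the process's coins. For $E_{\mathrm{alg}}$, Lemma~\ref{lem:aggalg} gives $\Pr[E_{\mathrm{alg}}] > 1/2$, and inspecting its proof shows the failure probability is in fact at most $1/4 + o(1)$: the bad event $|U| < n/4$ is bounded by $1/4$ via Markov, while the bad event ``fewer than a $(1/j)$-fraction of $U$ uploads'' is $o(1)$ by a Chernoff bound, using that on the complement of the first bad event $|U| = \Omega(n)$ and the per-node ownership events are mutually independent, each with probability at least $1/2$.

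Now I would simply union bound. For all sufficiently large $n$,
\[
\Pr[E_{\mathrm{opt}} \cap E_{\mathrm{alg}}] \;\geq\; 1 - \Pr[\overline{E_{\mathrm{opt}}}] - \Pr[\overline{E_{\mathrm{alg}}}] \;\geq\; 1 - o(1) - (1/4 + o(1)) \;>\; 1/2 .
\]
On the event $E_{\mathrm{opt}} \cap E_{\mathrm{alg}}$ the offline optimum is exactly $1$ while $\mathcal A$'s aggregation factor is $\Omega(n)$, so $\mathcal A$'s factor is $\Omega(n)$ times the offline optimum, which is precisely the conclusion of the corollary. (Small $n$, odd $n$, and $\ell$ not a power of two are absorbed into the constant as noted in the footnote accompanying the process definition, and the subsequent main theorem, Theorem~\ref{thm:agg:simplified}, then follows by averaging out the process's coins to pin down a single worst-case $\mathcal H$.)

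The argument is routine; the one thing requiring care — and thus the main obstacle — is the probability accounting. The corollary demands probability \emph{strictly} greater than $1/2$, and Lemma~\ref{lem:aggalg} on its own only barely reaches $1/2$, so I must lean on the fact that one of its two bad events is genuinely $o(1)$ rather than a crude $1/4$ in order to leave headroom for the $o(1)$ loss coming from Lemma~\ref{lem:aggopt}. Everything else is immediate: a union bound needs no independence between the shared smoothing randomness used by the two invocations, and ``aggregation factor $\Omega(n)$ against an offline optimum of $1$'' is literally a competitive ratio of $\Omega(n)$.
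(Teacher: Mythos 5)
Your proposal matches the paper's argument: the paper derives the corollary by exactly this union bound over Lemmas~\ref{lem:aggopt} and~\ref{lem:aggalg} applied to the $(\log n)$-stable pairing graph process. Your extra care about the probability accounting --- noting that ``failure $<1/2$'' from Lemma~\ref{lem:aggalg} plus an $o(1)$ loss from Lemma~\ref{lem:aggopt} does not automatically stay strictly below $1/2$, so one must find genuine slack inside Lemma~\ref{lem:aggalg} --- is a legitimate refinement of a point the paper glosses over.
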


\bibliographystyle{plain}
\bibliography{smoothing}

\begin{thebibliography}{10}

\bibitem{augustine:2012}
John Augustine, Gopal Pandurangan, Peter Robinson, and Eli Upfal.
\newblock Towards robust and efficient computation in dynamic peer-to-peer
  networks.
\newblock In {\em Proceedings of the ACM-SIAM Symposium on Discrete
  Algorithms}, 2012.

\bibitem{avin:2008}
Chen Avin, Michal Kouck\'{y}, and Zvi Lotker.
\newblock How to explore a fast-changing world (cover time of a simple random
  walk on evolving graphs).
\newblock In {\em Proceedings of the 35th International Colloquium on Automata,
  Languages and Programming, Part I}, ICALP '08, pages 121--132, Berlin,
  Heidelberg, 2008. Springer-Verlag.

\bibitem{clementi:2012}
Andrea Clementi, Riccardo Silvestri, and Luca Trevisan.
\newblock Information spreading in dynamic graphs.
\newblock In {\em Proceedings of the ACM Symposium on Principles of Distributed
  Computing}, 2012.

\bibitem{cornejo:2012}
Alejandro Cornejo, Seth Gilbert, and Calvin Newport.
\newblock Aggregation in dynamic networks.
\newblock In {\em Proceedings of the ACM Symposium on Principles of Distributed
  Computing}, 2012.

\bibitem{denysyuk:2014}
Oksana Denysyuk and Lu{\'\i}s Rodrigues.
\newblock Random walks on evolving graphs with recurring topologies.
\newblock In {\em Proceedings of the International Symposium on Distributed
  Computing}, 2014.

\bibitem{dutta:2013}
Chinmoy Dutta, Gopal Pandurangan, Rajmohan Rajaraman, Zhifeng Sun, and Emanuele
  Viola.
\newblock On the complexity of information spreading in dynamic networks.
\newblock In {\em Proceedings of the ACM-SIAM Symposium on Discrete
  Algorithms}, 2013.

\bibitem{ghaffari:2013}
Mohsen Ghaffari, Nancy Lynch, and Calvin Newport.
\newblock The cost of radio network broadcast for different models of
  unreliable links.
\newblock In {\em Proceedings of the ACM Symposium on Principles of Distributed
  Computing}, 2013.

\bibitem{haeupler:2011}
Bernhard Haeupler and David Karger.
\newblock Faster information dissemination in dynamic networks via network
  coding.
\newblock In {\em Proceedings of the ACM Symposium on Principles of Distributed
  Computing}, 2011.

\bibitem{dynamic-overview}
F.~Kuhn and R.~Oshman.
\newblock {Dynamic networks: models and algorithms}.
\newblock {\em ACM SIGACT News}, 42(1):82--96, 2011.

\bibitem{kuhn:2010}
Fabian Kuhn, Nancy Lynch, and Rotem Oshman.
\newblock Distributed computation in dynamic networks.
\newblock In {\em Proceedings of the ACM Symposium on Theory of Computing},
  2010.

\bibitem{kuhn:2011}
Fabian Kuhn, Rotem Oshman, and Yoram Moses.
\newblock Coordinated consensus in dynamic networks.
\newblock In {\em Proceedings of the ACM Symposium on Principles of Distributed
  Computing}, 2011.

\bibitem{lovasz:1996}
L.~Lov\'asz.
\newblock Random walks on graphs: A survey.
\newblock In D.~{Mikl\'os}, V.~T. {S\'os}, and T.~{Sz\H{o}nyi}, editors, {\em
  Combinatorics, Paul Erd\H{o}s is Eighty}, volume~2, pages 1--46. J\'anos
  Bolyai Mathematical Society, 1996.

\bibitem{newport:2014}
Calvin Newport.
\newblock Lower bounds for structuring unreliable radio networks.
\newblock In {\em Proceedings of the International Symposium on Distributed
  Computing}, 2014.

\bibitem{SpielmanT04}
Daniel~A. Spielman and Shang{-}Hua Teng.
\newblock Smoothed analysis of algorithms: Why the simplex algorithm usually
  takes polynomial time.
\newblock {\em J. {ACM}}, 51(3):385--463, 2004.

\bibitem{SpielmanT09}
Daniel~A. Spielman and Shang{-}Hua Teng.
\newblock Smoothed analysis: an attempt to explain the behavior of algorithms
  in practice.
\newblock {\em Commun. {ACM}}, 52(10):76--84, 2009.

\bibitem{spielman:2009}
Daniel~A Spielman and Shang-Hua Teng.
\newblock Smoothed analysis: an attempt to explain the behavior of algorithms
  in practice.
\newblock {\em Communications of the ACM}, 52(10):76--84, 2009.

\end{thebibliography}

\end{document}